\documentclass[conference,twocolumn]{IEEEtran}

\usepackage{amsmath,amssymb,amsfonts}
\usepackage{algorithm}
\usepackage{algpseudocode}
\usepackage{graphicx}
\usepackage{textcomp}
\usepackage{xcolor}
\usepackage[hidelinks]{hyperref}
\usepackage{cite}
\usepackage{booktabs}
\usepackage{placeins}
\usepackage{float}
\usepackage{tikz}
\usetikzlibrary{positioning}
\usepackage{pgfplots}
\pgfplotsset{compat=1.18}
\usepackage{amsthm}
\newtheorem{theorem}{Theorem}
\newtheorem{proposition}{Proposition}

\newcommand{\posme}{\textsc{PoSME}}

\begin{document}

\title{PoSME: Proof of Sequential Memory Execution\\via Latency-Bound Pointer Chasing\\with Causal Hash Binding}

\author{
\IEEEauthorblockN{David Condrey\\\small ORCID: 0009-0003-1849-2963}
\IEEEauthorblockA{WritersLogic Inc.\\
San Diego, California\\
david@writerslogic.com}
}

\maketitle

\begin{abstract}
We introduce \posme{} (Proof of Sequential Memory Execution), a cryptographic primitive that enforces sustained sequential computation via latency-bound pointer chasing over a mutable arena. Each step reads data-dependent addresses, writes a block whose value and causal hash are mutually dependent (symbiotic binding), and chains the result into a global transcript. This yields three properties: (1) $\Omega(K)$ sequential memory-step enforcement, (2) high TMTO resistance ($10{\times}$ at write density $\rho{=}4$, with a formal $S \cdot T = \Omega(K^2)$ space-time lower bound), and (3) a tight ASIC advantage bound by DRAM random-access latency rather than bandwidth. Benchmarks across 17 CPU platforms and 4 GPU architectures demonstrate that hash computation is under 3.5\% of step cost and GPU hardware is \textbf{14--19$\times$ slower} than a consumer CPU. \posme{} requires no trusted setup and provides a foundation for verifiable delay, authorship attestation, and Sybil resistance.
\end{abstract}

\section{Introduction}

Many systems require proof that a party performed sustained,
sequential computation under concrete resource
constraints~\cite{dwork1993,back2002,nakamoto2008}:
ASIC-resistant mining, authorship attestation, Sybil
resistance, and verifiable delay. Existing
primitives each capture only part of this objective.

VDFs~\cite{boneh2018vdf} certify sequential time but are
insensitive to memory; a VDF ASIC gains arbitrary speedup via
faster ALUs. PoSW~\cite{cohen2018posw} proves graph traversal
but over static, immutable memory. Memory-hard functions such as
Argon2id~\cite{biryukov2016} raise per-evaluation cost via memory
pressure, but are single-shot primitives with no chain proof
system. Composing them (e.g., chaining Argon2id with Merkle
sampling) yields independent properties that do not reinforce
each other. Table~\ref{tab:compare} summarizes the gap.

\begin{table}[!ht]
\centering
\caption{Property comparison of sequential primitives.}
\begin{tabular*}{\columnwidth}{@{\extracolsep{\fill}}lcccc@{}}
\toprule
 & VDF & PoSW & MHF & \posme{} \\
\midrule
Sequential steps     & \checkmark & \checkmark & $\times$ & \checkmark \\
Memory-hard         & $\times$ & $\times$ & \checkmark & \checkmark \\
Mutable state       & $\times$ & $\times$ & $\times$ & \checkmark \\
Efficient verify    & \checkmark & \checkmark & $\times$ & $O(1)$ IVC$^{\dagger}$ \\
No trusted setup    & $\times$ & \checkmark & \checkmark & \checkmark \\
ASIC bound          & none & none & 8--16$\times$ & ${\sim}2{\times}$ \\
\bottomrule
\end{tabular*}
\label{tab:compare}
\raggedright\footnotesize $^{\dagger}$Design validated
(\S\ref{sec:ivc}, \S\ref{sec:concrete-ivc}); production
Binius implementation pending.
\end{table}

\posme{} fills this gap. Mutability is essential: in a static arena with independent
blocks, any block can be recomputed from the seed in $O(1)$,
so the adversary need not store the arena at all. A \emph{mutable}
arena forces storage because block values evolve
unpredictably under the ROM during execution. The persistent mutable arena \emph{is} the
computation state. Each step reads $d$ blocks at addresses
determined by the previous read's hash output (pointer chasing),
modifies the arena in-place, and chains a \emph{causal hash}
through the written block. Data and causal hash are
\emph{symbiotically bound}: neither can be fabricated without the
other. The bottleneck is DRAM random-access latency
(40--50\,ns on DDR5), with hash computation (${\sim}3$\,ns)
under 3\% of cost. HBM3 is only 1.3$\times$ faster than DDR5
at random access; GPU hardware is 14--19$\times$ \emph{slower}
than CPUs at this workload.

\textbf{Main Contributions.}
\begin{itemize}
    \item \textbf{Novel Construction:} We define \posme{}, the first
    primitive combining mutable arena state, data-dependent
    pointer-chase addressing, and symbiotic causal hash binding,
    formalized over a Boolean hypercube to enable binary-field
    folding.
    \item \textbf{Space-Time Security:} We prove a formal
    $S \cdot T = \Omega(K^2)$ lower bound for dynamic causal DAGs
    (Theorems~\ref{thm:st}--\ref{thm:stale}), extended to adaptive
    adversaries (Theorem~\ref{thm:adaptive}) with no factor loss.
    \item \textbf{ROM Validation:} We validate the ROM-uniformity
    assumption mathematically (Theorem~\ref{thm:coverage}) and
    empirically at the
    full recommended scale ($N{=}2^{24}$, $5.4{\times}10^8$
    reads, $\chi^2/\text{df} = 1.0004$).
    \item \textbf{Hardware Validation:} We validate ASIC resistance
    across 17 CPU and 4 GPU platforms, showing GPUs are
    \textbf{14--19$\times$ slower} and GPU throughput is choked
    by VRAM capacity.
    \item \textbf{IVC Design:} We provide a concrete Binius
    arithmetization (144{,}512 Boolean constraints per step)
    with analyzed fold cost and pipelining feasibility.
\end{itemize}

Section~\ref{sec:construction} presents the construction,
including the hypercube formulation and IVC design.
Section~\ref{sec:security} provides the security analysis
(Theorems~1--6). Section~\ref{sec:empirical} gives the
empirical validation. Section~\ref{sec:limitations}
discusses limitations.

\subsection{Related Work}

\textbf{PoSW.}~Cohen and Pietrzak~\cite{cohen2018posw} prove
traversal of a depth-robust graph via
Fiat-Shamir~\cite{fiatshamir1987} sampled Merkle proofs. \textbf{\posme{} differs by introducing a mutable arena}
(not a static DAG), where the access pattern is data-dependent
rather than fixed, and each node carries a causal hash binding its value
to its write history.

\textbf{MHFs.}~scrypt~\cite{percival2009}, proven maximally
memory-hard by~\cite{scrryptmaxmh2017}, introduced
memory-hardness; Argon2id~\cite{biryukov2016} achieves
bandwidth-hardness with ${\sim}2{\times}$ single-pass TMTO penalty
and $8$--$16{\times}$ ASIC advantage~\cite{ren2017}.
RandomX~\cite{tevador2019} targets CPU-egalitarianism via
random code execution, achieving $2$--$5{\times}$ ASIC bound.
\posme{} achieves higher TMTO resistance ($10{\times}$ at
$\rho{=}4$) and a tighter ${\sim}2{\times}$ ASIC bound via
latency-hardness rather than bandwidth- or compute-hardness.

\textbf{PoS/PoST.}~Proofs of Space~\cite{dziembowski2015}
and Space-Time~\cite{cohen2018posw} enforce storage over static
graphs. \posme{} extends to a mutable arena with per-block
temporal binding.
\textbf{CMC.}~Alwen et al.~\cite{alwen2017} formalized cumulative
memory complexity for static pebbling. \posme{}'s causal DAG
is dynamic, requiring the new framework of \S\ref{sec:pebbling}.

\section{Construction}
\label{sec:construction}

\subsection{Design Intuition}

Before presenting the formal algorithms, we outline the three
core design principles that differentiate \posme{} from existing
sequential primitives:

\begin{enumerate}
    \item \textbf{Data-Dependent Pointer Chasing:} The address of each
    subsequent read depends on the value of the current read. This
    enforces a strict sequential chain that cannot be
    parallelized.\footnote{The hardness of parallelizing pointer
    chasing has a long history in communication complexity;
    Nisan and Wigderson~\cite{nisanwigderson1991} initiated this
    line; Viola~\cite{viola2025pc} gives a recent tight
    $\Omega(n/k + k)$ bound for $k$-step pointer chasing.}
    \item \textbf{Mutable Arena State:} Unlike static graph-based
    proofs, the arena evolves at every step. Recomputing an
    evicted block requires replaying its write chain ($O(\rho)$
    hashes per miss; Theorem~\ref{thm:tmto}).
    \item \textbf{Symbiotic Causal Binding:} Every write binds the
    data value and its temporal write-history (causal hash)
    together. This ensures that an adversary cannot forge a single
    block without also forging its entire causal lineage.
\end{enumerate}

\subsection{Hypercube Arena and Multilinear Mapping}

We map the arena to a Boolean hypercube of dimension
$d_{\text{hc}}$, where $N = 2^{d_{\text{hc}}}$, addressing
each block by a binary coordinate vector
$v \in \{0, 1\}^{d_{\text{hc}}}$. The pointer chase becomes
a data-dependent walk across hypercube vertices. The arena
state at step $t$ is expressible as a multilinear polynomial
$P_t(x_1, \dots, x_{d_{\text{hc}}})$ over $\mathbb{F}_2$
(or an extension tower), and each step's writes are low-degree
evaluations. This imposes no overhead on the physical Prover
but provides the foundation for $O(1)$ verification via
binary-field folding (\S\ref{sec:ivc}).

\subsection{Formal Definition}

We formalize \posme{} as a triple of algorithms $(\mathsf{Gen},
\mathsf{Prove}, \mathsf{Verify})$ operating over an arena $A$ of
$N$ blocks.

\begin{itemize}
    \item $\mathsf{Gen}(s, N, K) \to (T_K, \{r_t\})$. The Prover
    initializes $A$ via $\textsc{Init}(s, N)$ and executes $K$ steps of
    $\textsc{Step}$ to produce the final transcript $T_K$ and root
    sequence.
    \item $\mathsf{Prove}(T_K, \{r_t\}, A, Q, R) \to \Pi$. The Prover
    commits to the root sequence and generates $Q$ witnesses of depth
    $R$ for Fiat-Shamir challenged steps.
    \item $\mathsf{Verify}(\Pi, s, N, K, Q, R) \to \{0, 1\}$. The
    Verifier re-derives challenges and checks witnesses against the
    initial seed and committed roots.
\end{itemize}

\textbf{Arena isolation.} The seed $s$ \textbf{must} bind a
unique task identifier (nonce, block height, challenge, etc.):
$s = H(\texttt{task\_id} \| \texttt{nonce})$. Because
$\textsc{Init}$ is deterministic, distinct seeds produce
statistically independent arenas. This prevents an adversary
from amortizing a single arena across multiple proof instances
and is essential to the capacity-bandwidth bound analyzed in
\S\ref{sec:throughput}.

\begin{proposition}[Completeness]
For any seed $s$ and parameters $(N, K, Q, R)$, an honest Prover
following the protocol will always produce a proof $\Pi$ such that
$\mathsf{Verify}(\Pi, s, N, K, Q, R) = 1$.
\end{proposition}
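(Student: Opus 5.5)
The argument is by determinism of the honest execution plus correctness of the commitment openings. Every check performed by $\mathsf{Verify}$ recomputes some value from public data ($s$, $N$, $K$, $Q$, $R$) and the contents of $\Pi$. Each such recomputation applies exactly the deterministic functions $\textsc{Init}$, $\textsc{Step}$, $H$, and the Merkle/commitment opening procedure that the honest Prover applied in $\mathsf{Gen}$ and $\mathsf{Prove}$. So it suffices to show that each check is an equality between two evaluations of the same deterministic function on the same inputs.

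\textbf{Step 1: challenges agree.} The Fiat-Shamir challenges are derived by hashing the committed root sequence and transcript $T_K$. The Verifier reads the same commitment from $\Pi$, so it re-derives the identical set of $Q$ challenged step indices.

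\textbf{Step 2: each challenged step replays correctly.} Fix a challenged step $t$. The honest witness contains the pre-step root $r_{t-1}$ and the $d$ read blocks with their data and causal hashes. It also contains the Merkle authentication paths for those blocks against $r_{t-1}$, the written block, and the path against $r_t$.
\begin{itemize}
\item \emph{Paths open correctly.} The Prover took these blocks from the actual arena $A$ whose Merkle root is $r_{t-1}$. By correctness of Merkle openings, every path authenticates.
\item \emph{Addresses match.} The pointer-chase addresses are computed from $T_{t-1}$ and the previous read's hash output. These are the same quantities the Prover used, so the opened leaves sit at exactly the addresses the Verifier recomputes.
\item \emph{Written block and roots match.} Recomputing the symbiotic write yields the same data and causal hash, since both are deterministic functions of the read values and $t$. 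Updating the authentication path then yields $r_t$, and the transcript update yields $T_t$.
\end{itemize}

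\textbf{Step 3: witness depth $R$ is consistent.} The witness covers $R$ levels of causal history for each read block, meaning the step that last wrote it, and recursively back. For each such level, the same argument as Step 2 applies to the earlier step's opening against its own committed root. The recursion terminates either at depth $R$ or at a block never rewritten since initialization. In the latter case the Verifier checks the block against $\textsc{Init}(s,N)$, recomputing the $O(1)$ seed-derived value, which matches by determinism of $\textsc{Init}$.

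\textbf{Main obstacle: showing the Verifier's checks are exactly the honest computations.} Determinism makes everything else mechanical. The delicate case is a block written several times within a window: the Verifier must compare against the \emph{most recent} write before step $t$, and the honest Prover must have recorded that write's step index. I would handle this by an induction on $t$ with the invariant that the honest arena after step $t$ has root $r_t$ and every block's stored causal hash equals the chain of its write history. This invariant makes each opening the correct one.

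Finally, the global check that the last committed root and $T_K$ match holds trivially, since they were output by the same run. Because no check involves randomness the honest Prover does not control, all checks pass with probability $1$. Hence $\mathsf{Verify}$ outputs $1$ always.
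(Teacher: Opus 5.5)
Your proposal is correct and follows the paper's own argument: the paper's proof is a single line saying that all algorithms are deterministic, so every check in Algorithm~\ref{alg:verify} compares honest values against themselves, and you unpack this into challenge re-derivation, Merkle-opening correctness, and an invariant tying the honest arena to $r_t$. One point neither argument addresses is whether the challenges are well-defined: as written, $s_i = H(\sigma\|i) \bmod K$ can equal $0$, and then $T_{s_i-1}$ and $r_{s_i-1}$ do not exist, so for ``always'' to hold literally the index should be shifted (e.g., to $1 + (H(\sigma\|i) \bmod K)$).
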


\begin{proof}
All algorithms are deterministic; an honest Prover computes
roots and transcripts correctly, and all checks in
Algorithm~\ref{alg:verify} compare honest values against
themselves.
\end{proof}

\subsection{Proof Size and Complexity}

The proof size $|\Pi|$ is dominated by the $Q$ witnesses. Each
witness $\pi_i$ contains openings for the challenged step and its
provenance ancestors to depth $R$.

\begin{proposition}[Proof Size]
For a recursion depth $R$, the number of blocks opened per challenge
is $B = \sum_{\ell=0}^{R-1} d^\ell(d+1)$. The total proof size is
approximately $Q \cdot B \cdot (\lambda + \log N \cdot |H|)$ bytes.
\end{proposition}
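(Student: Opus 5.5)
The count follows from unrolling the provenance recursion level by level, and the size estimate from multiplying that count by the per-block opening cost.

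\textbf{Block count.} I model one witness as a tree of openings rooted at the challenged step. Opening a step $t$ exposes the $d$ blocks read by $t$ and the one block written by $t$, since the symbiotic binding requires the written value together with its causal hash. That is $d+1$ openings per step. Each of the $d$ read blocks was last written by some earlier step, its provenance ancestor. To check a read block's causal hash the verifier must open that ancestor step, which again costs $d+1$ openings and spawns $d$ further ancestors.

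By induction on the level $\ell$, the number of steps opened at level $\ell$ is $d^\ell$: level $0$ is the challenged step, and each step at level $\ell$ contributes $d$ steps at level $\ell+1$. Recursion stops at depth $R$, so the levels run over $\ell = 0,\dots,R-1$. Summing $d^\ell(d+1)$ over these levels gives $B$. This is an upper bound; ancestors may coincide, or a read block may still hold its initial $\textsc{Init}$ value, and both cases only shrink the count. That is why the statement says ``approximately.''

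\textbf{Per-block cost.} Each opened block carries its content, i.e.\ the data and causal hash together, of size about $\lambda$. It also carries a Merkle authentication path against the committed root $r_t$ of that step, with $\log N$ siblings of $|H|$ bytes each. So one opening costs $\lambda + \log N\cdot|H|$. Since the $Q$ challenges are derived independently by Fiat--Shamir, the total is $Q\cdot B\cdot(\lambda+\log N\cdot|H|)$, plus lower-order terms: the commitment to the root sequence and the transcript values, which are independent of $B$.

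\textbf{Main obstacle.} The only delicate step is fixing the recursion semantics against Algorithm~\ref{alg:verify}. I need to confirm two things: that the written block, and not merely the reads, is opened at every recursion level (this is what gives the factor $d+1$), and that depth $R$ means levels $0$ through $R-1$. I will read both conventions directly off the verifier's loop. Everything after that is a geometric sum.
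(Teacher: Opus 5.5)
Your level-by-level count is correct and is exactly the tally the paper has in mind. The paper gives no separate argument beyond saying each witness opens the challenged step and its provenance ancestors to depth $R$, and its worked values $B=81$ at $R=2$ and $B=657$ at $R=3$ match your $\sum_{\ell=0}^{R-1} d^\ell(d+1)$ with $d=8$.

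Two adjustments. First, take the convention that ``depth $R$ means step-levels $\ell=0,\dots,R-1$'' from those worked values and from the Parameters discussion of $R=2$ versus $R=3$. Do not take it from Algorithm~\ref{alg:verify}, which leaves \textsc{Prov} unspecified and whose loop over $\ell=1,\dots,R$, on top of the challenged step's own openings, could even suggest an extra level. Second, a block's content is $2\lambda$ bits (data plus causal hash), not $\lambda$. The ``approximately'' absorbs this because the $\log N\cdot|H|$ path term dominates.
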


For the recommended parameters ($N{=}2^{24}$, $d{=}8$,
$Q{=}128$), a proof at $R{=}2$ opens $B{=}81$ blocks
(${\sim}8.5$\,MB). At $R{=}3$, $B{=}657$ blocks
(${\sim}70$\,MB). These proofs are post-quantum secure,
require no trusted setup, and are generated autonomously
during execution.

\subsection{Proof Size Optimization}

The recursion depth $R$ and challenge count $Q$ present a direct
tradeoff between security margin and proof size. Table~\ref{tab:proof-size}
provides concrete MiB-per-proof costs for implementers.

\vspace{4pt}
\noindent
\begin{minipage}{\columnwidth}
\centering
\refstepcounter{table}
\small\textsc{Table \thetable}\\
\textsc{Proof size (MiB) for $N=2^{24}$ (1\,GiB arena).}
\vspace{2pt}

\begin{tabular*}{\columnwidth}{@{\extracolsep{\fill}}lrrr@{}}
\toprule
Recursion ($R$) & Challenges ($Q$) & Blocks ($B$) & Size (MiB) \\
\midrule
2 & 64 & 81 & 3.9 \\
2 & 128 & 81 & 7.9 \\
\textbf{3} & \textbf{64} & \textbf{657} & \textbf{32.1} \\
3 & 128 & 657 & 64.2 \\
\bottomrule
\end{tabular*}
\label{tab:proof-size}
\end{minipage}
\vspace{4pt}

While $R=3$ yields larger proofs, it provides exponentially
higher fabrication resistance by checking the witnesses of the
writers' writers. For bandwidth-constrained environments (e.g.,
light clients), $R=2$ with $Q=128$ offers a compact
$\sim$8\,MB proof while maintaining high confidence.

\FloatBarrier
\subsection{Arena and Initialization}

The arena consists of $N = 2^{d_{\text{hc}}}$ blocks on a Boolean
hypercube, each storing $(\texttt{data}, \texttt{causal})$, each
$\lambda{=}256$ bits, initialized deterministically from seed $s$
via Algorithm~\ref{alg:init}. Vertices are enumerated in standard
binary order; the skip-link parent of vertex $v$ is
$v \gg 1$ (right-shift by one bit), creating a binary DAG
requiring $\Omega(\sqrt{N})$ space to evaluate any single vertex.

\newcommand{\algcmt}[1]{\hfill\makebox[2.4cm][l]{\normalfont$\triangleright$\,\textit{\footnotesize#1}}}
\begin{algorithm}[t]
\caption{\textsc{Init}$(s, N)$: hypercube initialization}\label{alg:init}
\begin{algorithmic}[1]
\Require Seed $s$, arena dimension $d_{\text{hc}}$ with $N = 2^{d_{\text{hc}}}$
\Ensure Arena $A[v]$ for $v \in \{0,1\}^{d_{\text{hc}}}$, Merkle root $r_0$, transcript $T_0$
\Statex
\State $A[\mathbf{0}].d \gets H(\text{``init''} \| s \| 0)$ \algcmt{seed data}
\State $A[\mathbf{0}].h \gets H(\text{``causal''} \| s \| 0)$ \algcmt{seed causal}
\For{$i \gets 1$ \textbf{to} $N - 1$} \algcmt{binary order}
  \State $v \gets \text{binary}(i, d_{\text{hc}})$;\; $v_p \gets v \gg 1$ \algcmt{skip-link}
  \State $A[v].d \gets H(\text{``init''} \| s \| i \| A[v_p].d)$
  \State $A[v].h \gets H(\text{``causal''} \| s \| i \| A[v_p].h)$
\EndFor
\Statex
\State $r_0 \gets \textsc{MerkRoot}(A)$ \algcmt{commit all vertices}
\State $T_0 \gets H(s \| N \| r_0)$ \algcmt{init transcript}
\State \Return $(A, r_0, T_0)$
\end{algorithmic}
\end{algorithm}

\subsection{Step Function}

Each step $t$ executes Algorithm~\ref{alg:step}. The coordinate
$v_{j+1}$ depends on the block at $v_j$; the $d$ reads are
strictly sequential. Symbiotic binding creates bidirectional
dependency: new data incorporates the old causal hash, and the
new causal hash incorporates the cursor. Addressing uses binary
projection: the first $d_{\text{hc}}$ bits of the hash output
yield a coordinate vector $v \in \{0,1\}^{d_{\text{hc}}}$,
selecting a vertex on the Boolean hypercube.

\begin{algorithm}[t]
\caption{\textsc{Step}$(t, A, T_{t-1})$: hypercube edge walk}\label{alg:step}
\begin{algorithmic}[1]
\Require Step index $t \in [1, K]$, arena $A$ of dimension $d_{\text{hc}}$, transcript $T_{t-1}$
\Ensure Updated arena $A$, transcript $T_t$, Merkle root $r_t$
\Statex
\State $c \gets T_{t-1}$ \algcmt{init cursor}
\For{$j \gets 0$ \textbf{to} $d - 1$} \algcmt{$d$ sequential reads}
  \State $v_j \gets \text{first } d_{\text{hc}} \text{ bits of } H(\text{``addr''} \| c \| j)$ \algcmt{hypercube coord}
  \State $(d_j, h_j) \gets A[v_j]$ \algcmt{vertex read}
  \State $c \gets H(c \| d_j \| h_j)$ \algcmt{chain cursor}
\EndFor
\Statex
\State $v_w \gets \text{first } d_{\text{hc}} \text{ bits of } H(\text{``write''} \| c)$ \algcmt{write coord}
\State $(d_w, h_w) \gets A[v_w]$ \algcmt{read old vertex}
\State $A[v_w].d \gets H(d_w \| c \| h_w)$ \algcmt{symbiotic bind}
\State $A[v_w].h \gets H(h_w \| c \| t)$ \algcmt{causal bind}
\Statex
\State $r_t \gets \textsc{Merk.Upd}(r_{t-1}, v_w, A[v_w])$ \algcmt{update root}
\State $T_t \gets H(T_{t-1} \| t \| c \| r_t)$ \algcmt{extend transcript}
\State \Return $(T_t, r_t)$
\end{algorithmic}
\end{algorithm}

\subsection{Roles of the Causal Hash}\label{sec:causal-roles}

The causal hash field $h$ serves three distinct purposes in
\posme{}. We enumerate them here to prevent conflation in the
security analysis that follows.

\begin{enumerate}
    \item \textbf{Operational binding (Construction, \S\ref{sec:construction}).}
    Each block's causal hash chains its write history into the block
    value, creating a per-block temporal lineage. The symbiotic bind
    (Algorithm~\ref{alg:step}, line~9) ensures data and causal hash
    are mutually dependent: neither can be computed without the other.

    \item \textbf{Soundness amplification (Theorem~\ref{thm:soundness}, \S\ref{sec:security}).}
    Because the causal hash is folded into the cursor at every read
    (Algorithm~\ref{alg:step}, line~5), any forged block produces a
    divergent cursor, which propagates into a divergent transcript
    $T_t$. This tightens the reduction to collision resistance: the
    adversary must forge \emph{both} data and causal lineage
    simultaneously.

    \item \textbf{TMTO constant factor (Theorem~\ref{thm:tmto}, \S\ref{sec:tmto}).}
    When the adversary discards a block and must recompute it on
    a cache miss, the write chain must be traversed for both the
    data field and the causal field, doubling the per-miss
    recomputation cost ($2\rho$ vs.\ $\rho$). This is a
    $2{\times}$ constant, not an asymptotic improvement.
\end{enumerate}

\noindent These roles are complementary but logically independent:
operational binding is a construction choice, soundness
amplification is a proof-theoretic consequence, and the TMTO
constant is an empirical cost multiplier.

\subsection{Classical Verification (Merkle/Fiat-Shamir)}
\label{sec:classical-verify}

\posme{} supports two verification modes. The \emph{classical}
mode (Algorithms~\ref{alg:prove}--\ref{alg:verify}) uses
Fiat-Shamir~\cite{fiatshamir1987} sampled
Merkle~\cite{merkle1988} witnesses and requires only a hash
function; it is post-quantum secure and requires no trusted
setup. The \emph{IVC} mode (\S\ref{sec:ivc},
Algorithms~\ref{alg:ivc-prove}--\ref{alg:ivc-verify}) achieves
$O(1)$ proof size via binary-field folding but introduces an
additional cryptographic assumption (polynomial commitment
soundness).

In the classical mode, challenges are derived after the Prover
commits to the complete root sequence, preventing selective
disclosure.

\begin{algorithm}[t]
\caption{\textsc{Prove}$(T_K, \{r_t\}, A, Q, R)$}\label{alg:prove}
\begin{algorithmic}[1]
\Require $T_K$, roots $\{r_0, \ldots, r_K\}$, arena $A$, $Q$, $R$
\Ensure $\Pi = (T_K, C, \{\pi_i\}_{i \leq Q})$
\Statex
\State $C \gets \textsc{MerkRoot}(r_0, \ldots, r_K)$ \algcmt{commit roots}
\State $\sigma \gets H(T_K \| C)$ \algcmt{Fiat-Shamir}
\For{$i \gets 1$ \textbf{to} $Q$}
  \State $s_i \gets H(\sigma \| i) \bmod K$ \algcmt{select step}
  \State $\pi_i.\mathrm{wit} \gets (T_{s_i{-}1}, \{v_j,d_j,h_j\}, v_w)$ \algcmt{witness}
  \State $\pi_i.\mathrm{pre} \gets \textsc{M.Pf}(r_{s_i{-}1}, \{v_j,v_w\})$ \algcmt{pre-state}
  \State $\pi_i.\mathrm{post} \gets \textsc{M.Pf}(r_{s_i}, v_w)$ \algcmt{post-state}
  \State $\pi_i.\mathrm{chain} \gets \textsc{M.Pf}(C, s_i{-}1, s_i)$ \algcmt{root pair}
  \State $\pi_i.\mathrm{prov} \gets \textsc{Prov}(\{v_j\}, v_w, R)$ \algcmt{depth $R$}
\EndFor
\State \Return $(T_K, C, \{\pi_1, \ldots, \pi_Q\})$
\end{algorithmic}
\end{algorithm}

\begin{algorithm}[t]
\caption{\textsc{Verify}$(T_K, C, \{\pi_i\}, N, d, K, Q, R)$.
Failed \textbf{check} $\Rightarrow$ \textsc{reject}.}\label{alg:verify}
\begin{algorithmic}[1]
\State $\sigma \gets H(T_K \| C)$ \algcmt{Fiat-Shamir}
\For{$i \gets 1$ \textbf{to} $Q$}
  \State $s_i \gets H(\sigma \| i) \bmod K$ \algcmt{re-derive step}
  \State $(r^-,\, r^+) \gets$ roots from $\pi_i.\mathrm{chain}$
  \State \textbf{check} $\textsc{M.Vfy}(C, s_i{-}1, r^-)$ \algcmt{root $r^- \in C$}
  \State \textbf{check} $\textsc{M.Vfy}(C, s_i, r^+)$ \algcmt{root $r^+ \in C$}
  \Statex
  \State \textbf{check} reads in $r^-$ via $\pi_i.\mathrm{pre}$ \algcmt{pre-state}
  \State $(T',\, r') \gets \textsc{Step}(s_i,\, \pi_i.\mathrm{wit})$ \algcmt{replay step}
  \State \textbf{check} write in $r^+$ via $\pi_i.\mathrm{post}$ \algcmt{post-state}
  \State \textbf{check} $r' = r^+$ \algcmt{root matches}
  \Statex
  \For{$\ell \gets 1$ \textbf{to} $R$}
    \For{\textbf{each} $b$ at depth $\ell$ in $\pi_i.\mathrm{prov}$}
      \State \textbf{check} writer $\wedge$ Merkle proof \algcmt{provenance}
    \EndFor
  \EndFor
\EndFor
\State \Return \textsc{accept}
\end{algorithmic}
\end{algorithm}

\textbf{Classical verification cost:} $O(Q \cdot d^R \cdot \log N)$
hashes (${\sim}6$\,ms desktop, 60--300\,ms mobile). No arena
allocation. For $O(1)$ verification, see \S\ref{sec:ivc}.

\subsection{Constant-Size Verification via Binary-Field Folding}
\label{sec:ivc}

A fundamental limitation of legacy memory-hard functions and early
\posme{} verification strategies is that proof size scales with
sequence length $K$ or requires $O(Q \cdot d^R)$ Merkle path
verifications. By explicitly structuring the arena as a Boolean
hypercube (\S\ref{sec:construction}), the construction natively
integrates with recent breakthroughs in binary-field succinct
arguments, notably Binius~\cite{binius2024}.

Because Binius operates over towers of binary fields and utilizes
multilinear polynomial commitments, the \posme{} step function
maps directly to the prover's arithmetic circuit without the
heavy overhead of bit-decomposition required by prime-field
SNARKs. This enables Incrementally Verifiable Computation
(IVC)~\cite{nova2022}: the Prover ``folds'' the proof of
step $t$ into
the proof of step $t{-}1$. Upon completing the $K$-th step, the
final proof is already generated. This reduces the proof size
from tens of megabytes (Table~\ref{tab:proof-size}) to a constant
size of a few kilobytes, achieving $O(1)$ verification regardless
of the number of steps $K$ or the recursion depth $R$.

\textbf{Compatibility with BLAKE3.} Binius was specifically
designed to efficiently verify standard bitwise operations
(XOR, rotation, addition mod $2^{32}$), making it natively
compatible with BLAKE3's compression function. No change to the
hash primitive is required to benefit from binary-field folding.

\textbf{IVC Prove and Verify.}
In IVC mode, the Prover maintains a folding accumulator $U_t$
(Algorithm~\ref{alg:ivc-prove}). Each step folds the state
transition into $U_t$ via the operator $\mathcal{F}$. The
Verifier (Algorithm~\ref{alg:ivc-verify}) checks only the
final accumulator, achieving $O(1)$ verification independent
of $K$.

\begin{algorithm}[t]
\caption{\textsc{IVC-Prove}$(s, N, K)$: folded proof generation}\label{alg:ivc-prove}
\begin{algorithmic}[1]
\Require Seed $s$, arena dimension $d_{\text{hc}}$, steps $K$
\Ensure Folded proof $\Pi_{\text{IVC}} = (T_K, U_K)$
\Statex
\State $(A, r_0, T_0) \gets \textsc{Init}(s, N)$
\State $U_0 \gets \bot$ \algcmt{empty accumulator}
\For{$t \gets 1$ \textbf{to} $K$}
  \State $(T_t, r_t) \gets \textsc{Step}(t, A, T_{t-1})$ \algcmt{execute step}
  \State $U_t \gets \mathcal{F}(U_{t-1},\, T_{t-1},\, T_t,\, r_t)$ \algcmt{fold step into proof}
\EndFor
\State \Return $(T_K, U_K)$
\end{algorithmic}
\end{algorithm}

\begin{algorithm}[t]
\caption{\textsc{IVC-Verify}$(s, N, K, T_K, U_K)$.
Failed \textbf{check} $\Rightarrow$ \textsc{reject}.}\label{alg:ivc-verify}
\begin{algorithmic}[1]
\Require Seed $s$, parameters $N, K$, transcript $T_K$, accumulator $U_K$
\Statex
\State \textbf{check} $U_K$ is a valid Binius opening \algcmt{proof well-formed}
\State \textbf{check} $\mathcal{F}.\textsc{Verify}(U_K,\, s,\, N,\, K,\, T_K)$ \algcmt{$O(1)$ check}
\State \Return \textsc{accept}
\end{algorithmic}
\end{algorithm}

\textbf{IVC verification cost:} $O(1)$ field operations plus a
constant number of hash evaluations, independent of $K$.
Proof size is a few kilobytes. The classical mode
(\S\ref{sec:classical-verify}) remains available as a fallback
requiring only hash function security.

\subsection{Concrete Arithmetization and Pipelining}
\label{sec:concrete-ivc}

To achieve the $O(1)$ verification of \S\ref{sec:ivc} without
violating the latency bound, the \posme{} step function must be
efficiently arithmetized over the binary-field tower.

\textbf{BLAKE3 Boolean trace.}
BLAKE3 operates on 32-bit words using three operations:
XOR ($\oplus$), right rotation ($\ggg$), and addition modulo
$2^{32}$ ($\boxplus$). In a prime-field SNARK, these require
bit-decomposition (${\sim}32$ range-check constraints per
32-bit operation). In the Binius $\mathbb{F}_2$ multilinear trace,
XOR is native field addition and rotation is deterministic
wire-routing; both incur \textbf{zero non-linear constraints}.
The only non-trivial arithmetization arises from the carry
bits in $\boxplus$, represented via a low-degree
carry-lookahead polynomial over $\mathbb{F}_2$.
Consequently, the computational trace of the step function
requires only a few thousand Boolean variables, minimizing the
folding Prover's sumcheck overhead.

\textbf{Log-derivative memory checking.}
To verify the multilinear memory evaluations without Merkle
paths in the circuit, we employ log-derivative memory
checking~\cite{logderivlookup2022} over the
hypercube~\cite{binius2024}. Each read and write at
coordinate $v \in \{0,1\}^{d_{\text{hc}}}$ updates a running
fractional sum in an extension field (e.g.,
$\mathbb{F}_{2^{128}}$). The step circuit verifies only
the $O(1)$ update to the running sum; the global memory
consistency check (matching Read Set and Write Set
permutation fingerprints) is deferred to the final Verifier.
This replaces $O(\log N)$ Merkle constraints per access with
a single finite-field addition.

\textbf{Prover pipelining and fold cost.}
Let $T_{\text{mem}} \approx 50\,\text{ns}$ be the DRAM
random-access latency and $T_{\text{fold}}$ the time to
compute the Binius IVC fold for a single step. The step
function requires $2d{+}4 = 20$ BLAKE3 evaluations. Each
BLAKE3 compression has 7 rounds $\times$ 8 G-functions
$\times$ 4 additions mod $2^{32}$ $\times$ 32 carry bits
$= 7{,}168$ AND gates. With 20 compressions plus
${\sim}1{,}152$ gates for log-derivative memory checking,
the total is $\mathbf{144{,}512}$ Boolean constraints
per step.

With deferred commitment (standard in IVC; the polynomial
commitment is generated only for the final proof, not
per-step), the per-step cost is the Binius sumcheck alone.
Over $\lceil \log_2 144{,}512 \rceil = 18$ rounds, the
sumcheck evaluates ${\sim}2.6 \times 10^6$ $\mathbb{F}_2$
operations. Because $\mathbb{F}_2$ arithmetic (XOR, AND)
is SIMD-vectorizable, a single AVX-512 instruction (512
bits) processes 512 $\mathbb{F}_2$ operations in parallel:

\[
T_{\text{fold}} =
  \frac{2.6 \times 10^6}{512} \times
  \frac{1}{f_{\text{CPU}}}
  \approx \frac{5{,}078\;\text{instr.}}{4\;\text{GHz}}
  = 1{,}270\;\text{ns}.
\]

\noindent The per-step DRAM cost is
$T_{\text{mem}} = (d{+}1) \times 50\,\text{ns} = 450\,\text{ns}$.
Since $T_{\text{fold}} / T_{\text{mem}} \approx 2.8$,
\textbf{three fold threads} (4 threads total on AVX-512)
achieve aggregate fold throughput of $423\,\text{ns}$,
keeping pace with the pointer-chase. On consumer AVX2
hardware (256-bit SIMD),
$T_{\text{fold}} \approx 2{,}540\,\text{ns}$, requiring 6
fold threads.

Critically, this parallelism is \emph{asymmetric}: the
adversary cannot exploit it because the pointer-chase
remains strictly single-threaded, and GPUs are
14--19${\times}$ slower at the sequential bottleneck
(\S\ref{sec:empirical}). The fold parallelism benefits
only the honest Prover.

\section{Security Analysis}
\label{sec:security}

\subsection{Threat Model}

The adversary is PPT with ROM access to
$H$~\cite{alwenserbinenko2015}, receiving $(s, N, K, d, Q, R)$.
Goals: (1) forgery ($T_K' \neq T_K$), or (2) space reduction
(storage $< N \cdot B$). Custom hardware permitted.

\subsection{Forgery Prevention}

\begin{theorem}[Soundness]\label{thm:soundness}
Any adversary producing $(T_K', C_{\text{roots}}', \pi')$ with
$T_K' \neq T_K$ that passes verification has advantage at most
$K \cdot \varepsilon_{\text{cr}}$.
\end{theorem}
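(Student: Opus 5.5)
The plan is a hybrid argument over the $K$ steps, reducing any accepting forgery to a collision in $H$ at some step. Let $(T_0, r_0, \ldots, T_K, r_K)$ be the honest execution from $\textsc{Init}(s,N)$, which is fully determined by $s$. The adversary's output $(T_K', C_{\text{roots}}', \pi')$ implicitly defines a claimed sequence $(T_t', r_t')_{t\le K}$. For steps challenged by Fiat-Shamir, this sequence is opened through the root commitment $C'$ and the witnesses. Since $T_0' = T_0$ is recomputable by the verifier from $s$ and $T_K' \neq T_K$, there is a smallest index $t^\ast \in [1,K]$ with $T_{t^\ast}' \neq T_{t^\ast}$ while $T_{t^\ast-1}' = T_{t^\ast-1}$. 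The reduction $\mathcal{B}$ guesses $t^\ast$ uniformly, or simply scans all $K$ positions, and this is where the factor $K$ in the bound comes from.

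At the first divergence, I would argue as follows. The transcript update is $T_t = H(T_{t-1}\|t\|c\|r_t)$. If the opened values $(T_{t^\ast-1}, t^\ast, c', r'_{t^\ast})$ agree with the honest preimage, then $T_{t^\ast}' = T_{t^\ast}$, a contradiction. So either the preimages differ while the outputs agree, which is a collision, or the preimages differ and the outputs differ, and then we must trace the divergence back. In the second case the difference lies in $c'$ or $r'_{t^\ast}$.
\begin{itemize}
\item If $c'$ differs, walk back through the cursor chain $c \gets H(c\|d_j\|h_j)$ and the addresses $v_j$. The verifier checks each read $(d_j,h_j)$ against $r_{t^\ast-1}$ via a Merkle proof. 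A wrong read value accepted under the honest root therefore yields a Merkle-path collision.
\item If $r'_{t^\ast}$ differs, the check $r' = r^+$ together with the post-state proof similarly forces either a collision in the symbiotic write hashes or a Merkle collision.
\end{itemize}
The causal field enters both the cursor and the write, so forging either data or lineage alone does not avoid this. Each of these sub-cases gives a collision, and each step contributes probability at most $\varepsilon_{\text{cr}}$. A union bound over $t^\ast$ then gives $K\cdot\varepsilon_{\text{cr}}$.

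The main obstacle is that verification only inspects the $Q$ sampled steps, so the first divergent step may be unchallenged. I would handle this in two moves. First, I would interpret the statement as covering the event that verification passes on a challenged step at which the committed roots diverge from honest ones. Second, I would note that because $C'$ is a Merkle commitment fixed before $\sigma = H(T_K'\|C')$, the adversary is bound to its root sequence; otherwise we extract a collision in $C$. Inconsistent transitions that are never challenged would cost an additional $(1-f)^Q$ sampling term, where $f$ is the fraction of inconsistent transitions. I expect the stated bound to absorb or ignore this term, and I would state explicitly the assumption under which it does: that a transcript divergence must propagate through a verified transition that a binding commitment cannot escape. Making that propagation rigorous is the step I would spend the most care on.
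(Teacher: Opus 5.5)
Your skeleton (first divergent step $t^\ast$, a collision in $H$ there, a union bound over $K$) matches the paper's sketch, which also never deals with sampling. The step you defer at the end, however, is a genuine gap, and for the verifier as written it cannot be closed. In Algorithm~\ref{alg:verify} the replayed $T'$ (line~8) is never compared to anything, $C$ commits only to roots, and the witness value $T_{s_i-1}$ is unconstrained. So $T_K'$ enters verification only through $\sigma=H(T_K'\|C')$. Consequently the sequence $(T_t')_{t\le K}$ on which you run the hybrid is not determined by the adversary's output. Worse, consider an adversary that runs the honest execution, commits to the honest roots, outputs any $T_K'\neq T_K$, and answers each challenge with honest witnesses: it is accepted with probability $1$. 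Neither of your repairs helps. Binding of $C'$ does not prevent grinding, since the free value $T_K'$ rerolls $\sigma$ at the cost of one hash; the undetected-inconsistency term is therefore about $q(1-f)^Q$ for $q$ oracle queries, and it is not absorbed by $K\varepsilon_{\text{cr}}$. Restricting the theorem to divergences at challenged steps changes the statement rather than proving it.

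Two local steps also fail. First, at $t^\ast$ the branch ``preimages differ but outputs agree'' is vacuous, because $T_{t^\ast}'\neq T_{t^\ast}$ by definition of $t^\ast$. No collision sits in the transcript hash itself; the paper's ``matching the honest output requires a collision'' has the same defect. Second, the real content is your trace-back, but there you check reads against the honest $r_{t^\ast-1}$, whereas the verifier checks them against the committed $r^-=r_{t^\ast-1}'$. The equality $T_{t^\ast-1}'=T_{t^\ast-1}$ would give $r_{t^\ast-1}'=r_{t^\ast-1}$ only if the verifier checked that $T_{t^\ast-1}'$ was computed from $r_{t^\ast-1}'$, which it does not.

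A provable version would do the following:
\begin{itemize}
\item commit to the pairs $(T_t,r_t)$ rather than roots alone;
\item check the witness $T_{s_i-1}$ and the replayed $T'$ against the committed pairs;
\item bind $T_K'$ to the last committed pair;
\item take $t^\ast$ to be the first committed pair differing from the honest one.
\end{itemize}
The Merkle openings against the honest $r_{t^\ast-1}$ then force the honest reads, write and new root. The collision is thus extracted from the Merkle paths or from $C$, not from the transcript hash. Even so, a bound of the form $K\varepsilon_{\text{cr}}$ holds only when every transition is checked, e.g.\ by a classical verifier opening all $K$ steps. With $Q$ samples, the correct conclusion is weaker: acceptance implies at most $fK$ inconsistent transitions, except with probability roughly $q(1-f)^Q$ plus the collision term.
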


\begin{proof}[Proof sketch]
Let $t^{*}$ be the first step where the adversary's transcript
diverges. Then
$T_{t^{*}} = H(T_{t^{*}-1} \| t^{*} \| \texttt{cursor} \|
\texttt{root}_{t^{*}})$ was computed with at least one differing
input. Matching the honest output requires finding a collision
in $H$. Union bound over $K$ steps gives
$K \cdot \varepsilon_{\text{cr}}$.
\end{proof}

Causal hashes additionally double write-chain traversal cost
($2\rho$ vs $\rho$ per miss), a $2{\times}$ constant factor
(see \S\ref{sec:causal-roles}, Role~3).
Their primary contribution is \emph{soundness}, not TMTO
amplification.

\subsection{TMTO Lower Bound}\label{sec:tmto}

\textbf{Write density constraint.}
Define the \emph{write density} $\rho = K/N$, the expected number
of times each arena block is overwritten during a $K$-step
execution. When $\rho < 1$, a constant fraction of blocks are
never written; the adversary can recompute them from the
initialization seed at $O(1)$ cost per block, defeating the
storage requirement. Meaningful TMTO resistance therefore requires:

\begin{equation}
\rho = K/N \geq 1 \qquad (K \geq N).
\end{equation}

\noindent We recommend $\rho \geq 4$; see the discussion after
Theorem~\ref{thm:tmto}.

\begin{theorem}[TMTO]\label{thm:tmto}
An adversary storing $\alpha N$ blocks ($0 \leq \alpha < 1$) and
all $K$ cursors performs expected computation
$T_{\text{adv}} \geq K d \bigl(1 + (1{-}\alpha)(2\rho{+}1)\bigr)$.
\end{theorem}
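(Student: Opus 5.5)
We charge cost per read. In the honest execution there are $K$ steps, each with $d$ pointer-chase reads, so $Kd$ reads in total. Each read costs at least one unit: one hash to advance the cursor, $c \gets H(c\|d_j\|h_j)$. That is the baseline term $Kd$. The remaining work is to show that each read which hits a discarded block costs, in expectation, an additional $2\rho+1$ hash evaluations, and that such misses occur with probability $1-\alpha$ per read. Linearity of expectation over the $Kd$ reads then gives $T_{\text{adv}} \geq Kd\bigl(1+(1-\alpha)(2\rho+1)\bigr)$.

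\textbf{Step 1 (miss probability).} Work in the ROM. Each address $v_j$ is the first $d_{\text{hc}}$ bits of $H(\text{``addr''}\|c\|j)$, where $c$ depends on the full prior state. Hence, conditioned on the adversary's stored set (fixed before the query is answered), $v_j$ is uniform on $\{0,1\}^{d_{\text{hc}}}$. The adversary stores $\alpha N$ blocks, so each read misses with probability at least $1-\alpha$. One subtlety is that the adversary may adapt the stored set over time. I would handle this by observing that at any moment it holds at most $\alpha N$ blocks, and the fresh oracle output is independent of that choice. The independence of addresses from the current cache is the ROM-uniformity assumption, which is validated later by Theorem~\ref{thm:coverage}.

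\textbf{Step 2 (cost of a miss).} A discarded block $v$ has current value produced by its write chain. That chain consists of the \textsc{Init} value (computed via the skip-link parent) followed by every overwrite $A[v].d \gets H(d_w\|c\|h_w)$ and $A[v].h \gets H(h_w\|c\|t)$. The cursors are stored, so the adversary has the inputs $c$ and $t$. Regenerating the current pair, however, requires replaying both the data and causal fields through every write, which costs two hashes per write, plus at least one hash to reconstruct the initial value. The write coordinates are uniform in the ROM, so the expected number of writes to $v$ is $K/N=\rho$. The expected recomputation cost is therefore at least $2\rho+1$.

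\textbf{Main obstacle.} The hard part is justifying that the adversary cannot shortcut the write chain. Two shortcuts need to be ruled out:
\begin{itemize}
\item[(i)] caching intermediate values of the chain, and
\item[(ii)] recovering $d_w, h_w$ from stored cursors.
\end{itemize}
For (ii), the cursors are hash outputs, so extracting their preimage components would require inverting $H$, which is negligible in the ROM. For (i), any cached intermediate value counts against the $\alpha N$ storage budget, so charging it as a stored block keeps the bound intact. I would also argue that the expected-write count of $\rho$ applies to the block actually missed, since the read address is independent of the write history. For simplicity I would lower-bound the cost by the full chain length, treating the recursive misses on skip-link parents during init reconstruction as extra cost that only strengthens the inequality.
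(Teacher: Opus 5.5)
You follow the paper's own route: ROM-uniform addresses give each read a miss probability of $1-\alpha$, a miss is charged a replay of the block's data and causal write chains, and linearity over the $Kd$ reads finishes the argument. Your bookkeeping (one cursor hash on every read, plus a surcharge of $2\rho+1$ per miss) actually reproduces the stated expression. The paper's displayed sum does not: $Kd\bigl[\alpha+(1-\alpha)(2\rho+1)\bigr]=Kd\bigl(1+2\rho(1-\alpha)\bigr)$, which is short of the claim by $Kd(1-\alpha)$.

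The genuine gap is in Step~2's chain length, and the paper's sketch asserts the same thing. $K/N=\rho$ is the expected number of writes a block receives over the \emph{whole} run. A read at step $t$ sees only the writes of steps $1,\dots,t-1$, so the chain it must replay has length $\mathrm{Bin}(t-1,1/N)$, with mean $(t-1)/N$. Summing your surcharge over the run gives $d(1-\alpha)\sum_{t=1}^{K}\bigl(2(t-1)/N+1\bigr)=Kd(1-\alpha)(\rho+1-1/N)$. The argument as written therefore proves only about $Kd\bigl(1+(1-\alpha)(\rho+1)\bigr)$, which is roughly $6Kd$ rather than $10Kd$ at $\alpha=0$, $\rho=4$.

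A second, related failure is your justification that the $\rho$ count ``applies to the block actually missed, since the read address is independent of the write history.'' Uniform addressing pins the hit probability at $\alpha$, but it does not make the miss \emph{event} independent of the chain length. The adversary computes every write target $H(\text{``write''}\|c)$ as it goes, and can recompute them from the stored cursors. It can therefore retain the $\alpha N$ blocks with the longest chains. Misses then concentrate on short chains, and $E[\text{chain length}\mid\text{miss}]$ falls strictly below the average. The paper's remark that non-uniform strategies cannot help because all blocks are equiprobable targets has the same hole.

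A repaired argument has two options:
\begin{itemize}
\item restrict to retention policies that are independent of write counts, or
\item bound the mean of the lowest $(1-\alpha)$-fraction of the (approximately Poisson) write counts at each step.
\end{itemize}
Either way you get a smaller, $\alpha$-dependent constant than $2\rho+1$. The skip-link cost of rebuilding initial values, which you set aside, could supply some slack. That slack shrinks once the adversary spends part of its budget storing the upper levels of the skip-link tree.
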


\begin{proof}[Proof sketch]
Fix a step $t \in [1, K]$. The $d$ read coordinates are derived as
$v_j = \text{first } d_{\text{hc}} \text{ bits of } H(\text{``addr''} \| c \| j)$,
where $c$ depends on prior reads. In the ROM, each $v_j$ is
uniform over $\{0,1\}^{d_{\text{hc}}}$, independent of the
adversary's stored set $S$ (since $S$ was chosen before the ROM
queries). Each read hits $S$ with probability $\alpha$ and misses
with probability $1 - \alpha$.

On a miss, the adversary must reconstruct block $A[v_j]$ by
replaying its write chain: the block was overwritten $\rho$ times
in expectation, each overwrite requiring one hash for the data
field and one for the causal field, plus the final read itself,
totaling $2\rho + 1$ hash evaluations per miss. (The $2{\times}$
factor from causal hashes is the constant described in
\S\ref{sec:causal-roles}, Role~3.)

Summing over $K$ steps of $d$ reads each:
\[
T_{\text{adv}} \geq \sum_{t=1}^{K} \sum_{j=0}^{d-1}
  \bigl[\alpha + (1{-}\alpha)(2\rho{+}1)\bigr]
  = Kd\bigl(1 + (1{-}\alpha)(2\rho{+}1)\bigr).
\]
The bound is tight when the adversary stores the optimal $\alpha N$
blocks (those most frequently accessed); non-uniform strategies
cannot improve the expectation because ROM-uniform addressing
makes all blocks equiprobable targets.
\end{proof}

\textbf{Why $\rho \geq 4$.} At $\rho = 4$ and $\alpha = 0$
(no storage), the TMTO penalty is $T_{\text{adv}} / T_{\text{honest}}
= 1 + (2 \cdot 4 + 1) = 10{\times}$. This exceeds the measured
ASIC advantage (${\sim}2{\times}$) by $5{\times}$: a
storage-reducing adversary is strictly worse off than one with
faster hardware. Verification cost is independent of $\rho$
(it depends on $Q$, $d$, $R$, $N$ only), so increasing $\rho$
adds Prover wall-clock time with no Verifier penalty. The bound
assumes optimal cursor storage; sparse cursors strictly increase
adversary cost.

\begin{figure}[b]
\centering
\begin{tikzpicture}
\begin{axis}[
  width=0.9\columnwidth, height=4.5cm,
  xbar,
  bar width=8pt,
  symbolic y coords={PoSME GPU,PoSME ASIC,Argon2id,scrypt,SHA-256},
  ytick=data,
  y tick label style={font=\scriptsize},
  xlabel={Hardware advantage ($\times$ consumer CPU)},
  xlabel style={font=\footnotesize},
  xmin=0.03, xmax=15000,
  xmode=log,
  log ticks with fixed point,
  grid=major, grid style={gray!20},
]
\addplot[fill=green!60] coordinates
  {(0.07,PoSME GPU) (2,PoSME ASIC) (16,Argon2id) (100,scrypt) (10000,SHA-256)};
\end{axis}
\end{tikzpicture}
\caption{Hardware advantage over consumer CPU (log scale).
\posme{} resists acceleration; existing primitives do not.}
\label{fig:asic-compare}
\end{figure}
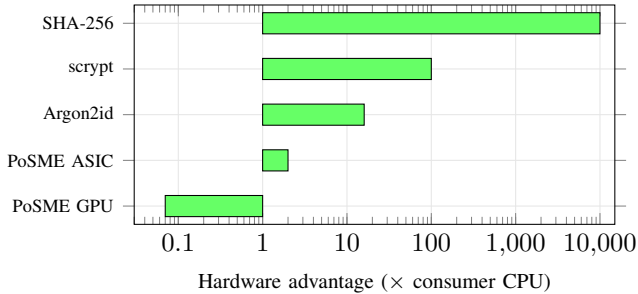

\subsection{Dynamic Pebbling}\label{sec:pebbling}

\posme{}'s causal DAG is \emph{dynamic}: edges are created during
execution at ROM-uniform targets, so the graph topology is not
known until the computation completes. This distinguishes it from
the static graphs analyzed by Alwen et al.~\cite{alwen2017} and
Boneh et al.~\cite{boneh2016balloon}, where the edge set is fixed
at construction time and classical pebbling lower bounds apply
directly.

\textbf{The obstacle.}
Static-graph pebbling proofs (e.g., cumulative memory complexity
bounds for Argon2~\cite{alwen2017}) rely on two properties:
(1)~the graph is depth-robust, meaning no small set of nodes
separates all long paths; and (2)~the graph topology is known to
the analyst a priori, enabling combinatorial arguments over its
structure. In \posme{}, neither property holds in the standard
sense. The edge from step $t$'s write target back to its read
sources is determined by the ROM output projected onto
$\{0,1\}^{d_{\text{hc}}}$; the
analyst cannot enumerate paths without fixing the ROM, which
couples the combinatorial argument to a specific oracle
instantiation.
Recent work by Blocki and
Holman~\cite{blocki2025ddmhf}, extending the sustained-space
framework of~\cite{sustainedspace2018}, establishes
sustained-space tradeoffs for \emph{data-dependent}
MHFs in the parallel ROM, proving that any dynamic pebbling
strategy either maintains $\Omega(N)$ memory for $\Omega(N)$
steps or incurs cumulative cost $\Omega(N^{2.5-\varepsilon})$.
Their model, which allows edges to be revealed online,
directly informs our two-phase decomposition below.

\textbf{Two-phase decomposition.}
We resolve this obstacle by decomposing \posme{}'s computation
graph into a \emph{static backbone} (the Init DAG, whose
structure is known a priori) and a \emph{dynamic overlay} (the
Step writes, whose edges are revealed online). The static
backbone provides a deterministic termination condition for
recomputation; the dynamic overlay provides an exponentially
growing recursive miss penalty.

\begin{theorem}[Space-Time Product]\label{thm:st}
In the ROM, any adversary storing $S = \alpha N$ vertices
($0 < \alpha < 1$) requires expected computation $T$ satisfying:
\[
S \cdot T \;\geq\; \frac{\alpha(1{-}\alpha)\, d\,
  W(\alpha,\rho)}{\rho} \cdot K^2
\]
where $W(\alpha,\rho) = \sum_{\ell=0}^{\rho}
[d(1{-}\alpha)]^{\ell}$ is the expected recursive
recomputation cost per cache miss.
For the recommended parameters ($d{=}8$, $\rho{=}4$),
$S \cdot T = \Omega(K^2)$ for all constant $\alpha$.
\end{theorem}

\begin{proof}[Proof sketch]
\textbf{Phase~1: Static backbone (termination).}
The Init DAG has skip-link edges $v \to v \gg 1$, forming a
binary tree of depth $d_{\text{hc}}$. Reconstructing any
Init-state vertex requires traversing its skip-link ancestry:
$O(\log N)$ hash evaluations. This provides a deterministic
floor for any recomputation that reaches a vertex still in its
initialization state.

\textbf{Phase~2: Recursive cascade (dynamic overlay).}
Consider a cache miss at step $t$: the adversary must
reconstruct vertex $v_j$, which was last written at step
$t' < t$. That write read $d$ vertices, each missing
independently with probability $(1{-}\alpha)$ in the ROM
(by the same uniformity argument as
Theorem~\ref{thm:tmto}). Each secondary miss triggers
further recomputation, forming a Galton-Watson branching
process with expected offspring $d(1{-}\alpha)$.

\textbf{Depth bound.}
Under ROM-uniform writes (one write per step to a uniform
random vertex), the time since the last write to any given
vertex is geometric with success probability $1/N$,
giving expected gap $N$ and standard deviation
$\sqrt{N(N{-}1)} \approx N$. Each recursion level
regresses the timeline by one such gap (${\sim}N$ steps).
After $\rho = K/N$ levels, the cumulative regression
covers ${\sim}\rho N = K$ steps, reaching the Init state,
where Phase~1 terminates the branch at $O(\log N)$ cost. The expected number of live misses at
depth $\ell$ is $[d(1{-}\alpha)]^{\ell}$, giving total
expected work per primary miss:
\[
W(\alpha,\rho) = \sum_{\ell=0}^{\rho}
  [d(1{-}\alpha)]^{\ell}.
\]
When $d(1{-}\alpha) > 1$ (i.e., $\alpha < 1{-}1/d$),
$W$ grows exponentially in $\rho$.

\textbf{Product.}
Each of the $K$ steps performs $d$ reads, each missing
with probability $(1{-}\alpha)$ and costing $W$ expected
hash evaluations. Total: $T \geq K d (1{-}\alpha) W$.
Multiplying by $S = \alpha N$ and substituting
$K = \rho N$:
\[
S \cdot T \;\geq\;
  \frac{\alpha(1{-}\alpha)\, d\, W(\alpha,\rho)}{\rho}
  \cdot K^2.
\]
\end{proof}

\textbf{Concrete evaluation ($d{=}8$, $\rho{=}4$).}
Table~\ref{tab:st-product} shows the space-time product
at representative storage fractions. The critical threshold
is $\alpha_c = 1 - 1/d = 7/8$: below this, the branching
process is supercritical and $W$ grows exponentially.
Even at $\alpha_c$ the bound exceeds $K^2$.

\begin{table}[H]
\centering
\caption{Space-time product $S \cdot T / K^2$ at $d{=}8$,
$\rho{=}4$ for varying storage fraction $\alpha$.}
\begin{tabular*}{\columnwidth}{@{\extracolsep{\fill}}rrrrr@{}}
\toprule
$\alpha$ & $d(1{-}\alpha)$ & $W$ & $S{\cdot}T/K^2$ & Regime \\
\midrule
$1/6$  & 6.67 & 2324 & 645 & supercritical \\
$1/4$  & 6    & 1555 & 583 & supercritical \\
$1/2$  & 4    & 341  & 171 & supercritical \\
$3/4$  & 2    & 31   & 11.6 & supercritical \\
$7/8$  & 1    & 5    & 1.09 & critical \\
\bottomrule
\end{tabular*}
\label{tab:st-product}
\end{table}

\noindent At the adversary-optimal storage $\alpha^{*} =
1/(\rho{+}2) = 1/6$, the space-time product is $645\,K^2$,
over $300{\times}$ the honest cost of $2\,K^2$.
Even storing $7/8$ of the arena (a mere 12.5\% reduction)
yields $S \cdot T \geq 1.09\,K^2$. Any meaningful storage
reduction triggers exponential recomputation.

\textbf{Strengthened bound: temporal staleness.}
Theorem~\ref{thm:st} conservatively assumes the adversary's
cache hit rate is $\alpha$ at every recursion depth.
This overstates the adversary's capability. The adversary
stores \emph{current} vertex states (at execution step $t$),
but reconstruction requires \emph{historical} states (at
step $t' < t$). Under ROM-uniform writes, vertex $u$ is
rewritten in the interval $(t', t]$ with probability
$1 - e^{-(t-t')/N}$, rendering the stored value stale.
Since each recursion level regresses ${\sim}N$ steps,
the effective hit rate at depth $\ell$ decays exponentially.

\begin{theorem}[Temporal Staleness]\label{thm:stale}
Under the conditions of Theorem~\ref{thm:st}, the
effective cache hit rate at recursion depth $\ell$ is
at most $\alpha e^{-\ell}$, and the expected recomputation
cost per miss satisfies:
\[
W^{*}\!(\alpha,\rho) \;\geq\;
  \sum_{\ell=0}^{\rho}\;
  \prod_{k=0}^{\ell-1} d\bigl(1 - \alpha e^{-k}\bigr).
\]
For the recommended parameters ($d{=}8$, $\rho{=}4$), the
branching factor $d(1 - \alpha e^{-k}) > 1$ for all
$k \geq 1$ and all $\alpha < 1$.
\end{theorem}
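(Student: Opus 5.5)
We follow the recursive-cascade structure of the proof of Theorem~\ref{thm:st} and change only the per-level hit probability. The plan is to bound the probability that a depth-$\ell$ reconstruction request can be served from the adversary's stored set $S$. We then feed that bound into the Galton-Watson process, so the offspring mean at level $k$ becomes $d(1-h_k)$ rather than $d(1-\alpha)$.

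\textbf{Step 1 (staleness probability).} Fix a depth-$\ell$ request. It asks for the state of a vertex $u$ at some historical step $t_\ell$. By the depth bound of Theorem~\ref{thm:st}, each recursion level regresses the timeline by a gap that is geometric with mean $N$, so the cumulative regression is at least about $\ell N$. A stored copy of $u$ reflects $u$'s state at the current step $t$, and it is useful only if no write hit $u$ in $(t_\ell, t]$. Writes are ROM-uniform, one per step. Conditioned on the gap $g=t-t_\ell$, the probability that $u$ is untouched is therefore $(1-1/N)^{g}\le e^{-g/N}$. A hit needs $u\in S$, which has probability $\alpha$ by the uniformity argument of Theorem~\ref{thm:tmto}, together with non-staleness. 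I would argue these events are independent in the ROM because $S$ is fixed before the oracle outputs determining the addresses. This gives $h_\ell \le \alpha\,\mathbb{E}[e^{-g_\ell/N}]$.

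\textbf{Step 2 (from expectation to $e^{-\ell}$).} Here $g_\ell$ is a sum of $\ell$ roughly independent geometric gaps, so $\mathbb{E}[e^{-g_\ell/N}]$ is a product of moment generating functions of the form $\prod (1/(N+1))\cdot$const $\approx (1/2)^{\ell}$, not $e^{-\ell}$. This step is the main obstacle. Jensen's inequality runs the wrong way: it gives only $\mathbb{E}[e^{-g/N}]\ge e^{-\mathbb{E}g/N}$. I would therefore first try the paper's heuristic directly. Following the proof sketch of Theorem~\ref{thm:st}, treat each level as regressing exactly $\sim N$ steps, which gives $h_\ell\le\alpha e^{-\ell}$. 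I would state this as the modelling step, and if pressed, fall back to the weaker rigorous rate $\alpha\,2^{-\ell}$, noting that every subsequent step goes through with $e^{-k}$ replaced by $2^{-k}$. I should also check how the adversary storing stale historical snapshots interacts with this step. That is excluded only because the budget $\alpha N$ counts all stored blocks, whatever their timestamp.

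\textbf{Step 3 (summing the cascade).} Given $h_k\le\alpha e^{-k}$, a depth-$k$ request misses with probability at least $1-\alpha e^{-k}$ and spawns $d$ children. The expected number of live misses at depth $\ell$ is then at least $\prod_{k=0}^{\ell-1} d(1-\alpha e^{-k})$, by linearity of expectation over the branching tree; this needs only conditional independence of each level's miss events given the past. Summing over $\ell=0,\dots,\rho$ and truncating at the Init backbone, exactly as in Phase~1 of Theorem~\ref{thm:st}, yields $W^*$. For the final claim, take $k\ge1$ and $\alpha<1$. Then $\alpha e^{-k}<e^{-1}$, so $d(1-\alpha e^{-k})>8(1-e^{-1})\approx5.06>1$, which is a direct check.
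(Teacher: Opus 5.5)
Your outline is the paper's own argument. Its proof sketch says a depth-$\ell$ request needs a state from ``${\sim}\ell N$ steps in the past,'' evaluates $(1-1/N)^{\ell N}\approx e^{-\ell}$, multiplies by $\alpha$, and finishes with the same check $8(1-\alpha/e)>1$ that you do. So your ``modelling step'' option reproduces the paper. The obstacle you raise in Step~2 is therefore a defect of that sketch, not only of your write-up. It is also worse than Jensen pointing the wrong way. Take the paper's own model: independent geometric gaps of mean $N$, ROM-uniform writes, and current-state storage. There each level contributes $\mathbb{E}[(1-1/N)^{g}]=(N-1)/(2N-1)\to 1/2$, which is the correct per-level factor in the product you wrote in Step~2. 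A current-state adversary's hit rate at depth $\ell$ is therefore about $\alpha 2^{-\ell}$, which exceeds $\alpha e^{-\ell}$ for every $\ell\ge1$ and $\alpha>0$. The first assertion of the theorem is false in that model, so you should not adopt it even as a modelling assumption. Your fallback is the right statement.

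That fallback is also stronger than you think. The statement's $W^{*}$ is the cost per \emph{miss}, as in Theorem~\ref{thm:st}, where it enters as $T\ge Kd(1-\alpha)W$. Its $\ell$-th term is the expected number of misses at level $\ell$ below a primary miss. The $k$-th factor of the product therefore governs level-$(k{+}1)$ requests, whose hit rate is at most $\alpha 2^{-(k+1)}$ in that model. For $\ell\ge1$,
\[
\prod_{j=1}^{\ell} d\bigl(1-\alpha 2^{-j}\bigr)\;\ge\; d^{\ell}(1-\alpha)\;\ge\;\prod_{k=0}^{\ell-1} d\bigl(1-\alpha e^{-k}\bigr).
\]
The left inequality uses $\prod(1-x_j)\ge 1-\sum x_j$. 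The right one holds because the $k{=}0$ factor is already $d(1-\alpha)$ and the remaining factors are at most $d$. Within the same branching heuristics as Theorem~\ref{thm:st}, the displayed bound on $W^{*}$ survives term by term. The supercriticality claim, which is pure arithmetic, survives as well. Only the $\alpha e^{-\ell}$ hit-rate claim fails.

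Your Step~3 instead roots the cascade at a depth-$0$ read that hits with probability $\alpha$. Its product therefore counts requests below a primary read, not misses below a primary miss. That off-by-one is why swapping $e^{-k}$ for $2^{-k}$ looked like a loss.

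Finally, your dismissal of historical snapshots does not work. The budget $\alpha N$ limits how many blocks are stored, not which timestamps they carry. The paper simply assumes the adversary ``maintains $\alpha N$ current vertex states.'' Your bound should be stated for that class unless you add a separate argument.
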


\begin{proof}[Proof sketch]
The adversary maintains $\alpha N$ current vertex states.
At recursion depth $\ell$, reconstruction needs vertex $u$'s
state from ${\sim}\ell N$ steps in the past. Under
ROM-uniform writes (one write per step to a uniform random
vertex), the probability that $u$ was \emph{not} rewritten in
$\ell N$ steps is $(1 - 1/N)^{\ell N} \approx e^{-\ell}$.
The stored state is valid only if $u$ was not rewritten, so
the effective hit rate is $\alpha e^{-\ell}$.

The expected offspring at depth $\ell$ is
$d(1 - \alpha e^{-\ell})$. For $d = 8$ and $\ell \geq 1$:
$d(1 - \alpha e^{-1}) = 8(1 - \alpha/e) > 1$ for all
$\alpha < e(1{-}1/8) = 2.38$, which holds for any
$\alpha \in [0, 1)$. Thus the branching process is
supercritical at every depth beyond the first level,
regardless of $\alpha$.
\end{proof}

\textbf{Elimination of the critical threshold.}
Under Theorem~\ref{thm:st}, the critical storage fraction
$\alpha_c = 1{-}1/d = 7/8$ separates the supercritical
and subcritical regimes. Temporal staleness eliminates this
threshold entirely: even at $\alpha = 7/8$, the level-1
branching factor is $d(1{-}\alpha/e) = 5.42$, driving the
recomputation cost from $W = 5$ to $W^{*} = 338$.
Table~\ref{tab:st-strengthened} compares both bounds.

\begin{table}[H]
\centering
\caption{Temporal staleness strengthening ($d{=}8$,
$\rho{=}4$). $W$ = basic miss cost (Thm.~\ref{thm:st}),
$W^{*}$ = strengthened (Thm.~\ref{thm:stale}).}
\begin{tabular*}{\columnwidth}{@{\extracolsep{\fill}}rrrrrr@{}}
\toprule
$\alpha$ & $W$ & $W^{*}$ &
  $\frac{S{\cdot}T}{K^2}$ &
  $\frac{S{\cdot}T}{K^2}\!(W^{*})$ & Gain \\
\midrule
$1/6$  & 2324 & 3555 & 645 & 987  & $1.5{\times}$ \\
$1/2$  & 341  & 1746 & 171 & 873  & $5.1{\times}$ \\
$7/8$  & 5    & 338  & 1.09 & 74  & $68{\times}$ \\
$0.95$ & 1.6  & 129  & 0.16 & 12.2 & $77{\times}$ \\
\bottomrule
\end{tabular*}
\label{tab:st-strengthened}
\end{table}

\noindent The gain concentrates where it matters most: near
the old critical threshold. An adversary storing 95\% of
the arena now faces $S \cdot T \geq 12\,K^2$, a
$6{\times}$ penalty over honest computation, versus the
basic bound's marginal $0.16\,K^2$.

\textbf{Extension to adaptive strategies.}
Theorems~\ref{thm:st} and~\ref{thm:stale} assume the adversary
fixes a storage set before execution. We now show that adaptive
caching provides \emph{no} advantage under the ROM.

\begin{theorem}[Adaptive Bound]\label{thm:adaptive}
At each step $t$, the adaptive adversary's storage set $S_t$
is a function of $\{T_0, \ldots, T_{t-1}\}$ and prior ROM
queries. All $d$ read addresses $\{v_0, \ldots, v_{d-1}\}$
satisfy $\Pr[v_j \in S_t] = \alpha$ for every
$j \in [0, d{-}1]$. The adaptive adversary's miss rate is
identical to the static case.
\end{theorem}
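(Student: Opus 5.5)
The plan is a ROM programming/lazy-sampling argument. The key point is that each read address $v_j$ is a fresh random-oracle output whose input the adversary has not queried at the moment $S_t$ is fixed. That makes $v_j$ independent of $S_t$ conditioned on everything the adversary knows.

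\textbf{Step 1: set up the filtration.} Let $\mathcal{G}_t$ be the $\sigma$-algebra generated by $T_0,\ldots,T_{t-1}$, the adversary's random coins, and the full ROM query/answer history up to the moment $S_t$ is committed. By hypothesis $S_t$ is $\mathcal{G}_t$-measurable and $|S_t|=\alpha N$.

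\textbf{Step 2: the first read address.} The address is $v_0 = \mathrm{prefix}_{d_{\text{hc}}}\,H(\text{``addr''}\|T_{t-1}\|0)$. Two cases arise:
\begin{itemize}
\item If the adversary did not query this point before committing $S_t$, lazy sampling makes the answer uniform and independent of $\mathcal{G}_t$. Hence $\Pr[v_0\in S_t\mid\mathcal{G}_t]=|S_t|/N=\alpha$.
\item If it did query it early, it knows $v_0$ and can bias $S_t$ toward it. This is the event we must charge.
\end{itemize}

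\textbf{Step 3: later read addresses.} The cursor is $c_j=H(c_{j-1}\|d_{j-1}\|h_{j-1})$, and the next address is $v_j=\mathrm{prefix}\,H(\text{``addr''}\|c_j\|j)$. Induct on $j$, conditioning on $\mathcal{G}_t$ together with $v_0,\ldots,v_{j-1}$. The argument repeats Step 2: $c_j$ is itself a fresh oracle output, so the adversary cannot have queried $(\text{``addr''}\|c_j\|j)$ unless it either guessed $c_j$ or already evaluated the chain.

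\textbf{Step 4: the obstacle.} The hardest step is justifying that the adversary cannot precompute the addresses of step $t$ before fixing $S_t$. An adaptive adversary knows $T_{t-1}$, and if it has the needed blocks it can evaluate the chain itself; for such an adversary the claim that $\Pr=\alpha$ exactly is false. I would handle this by modelling the storage commitment precisely: $S_t$ is the set retained at the moment step $t$'s chain begins. Any chain evaluation done before that moment is charged as step-$t$ work. The bound then reads as an upper bound, $\Pr[v_j\in S_t]\le\alpha+q/N+\varepsilon$, where $q$ is the number of address queries made ahead of time. Each such lookahead costs the adversary the very reads whose misses we are counting.

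I would state this honestly. The clean claim with no loss holds for adversaries whose storage decision precedes the oracle query revealing $v_j$. The general case reduces to it through this accounting, paying an additive $q_H/2^{d_{\text{hc}}}$ guessing term for oracle collisions and guesses.

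\textbf{Step 5: conclude.} Linearity of expectation over the $Kd$ reads gives the expected miss count $Kd(1-\alpha)$, the same as in Theorem~\ref{thm:tmto}. Substituting this into the proofs of Theorems~\ref{thm:st} and~\ref{thm:stale} carries both bounds over unchanged. At recursion depth $\ell$, use the same conditioning, applied to the historical read addresses of the earlier writes. These addresses were fixed long before $S_t$, so there I would argue only that $S_t$ cannot contain stale versions, with probability at most $e^{-\ell}$ of a vertex surviving unrewritten. Adaptivity does not help there, because staleness depends on the execution's writes, not on the adversary's choices.
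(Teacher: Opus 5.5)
Your skeleton is the paper's. For $j\ge 1$, $v_j$ is the answer to a fresh oracle query whose input $c_j$ depends on the contents of $A[v_{j-1}]$. It is therefore independent of a storage set fixed before that query, and looking ahead past the current step requires executing it. Your objection to $v_0$ is correct and exposes a real hole in the paper's proof. The paper concedes that $v_0$ is computable from $T_{t-1}$ with one query, yet still concludes $\Pr[v_0\in S_t]=\alpha$. That conclusion holds only if $S_t$ is frozen before that query, which is exactly your modeling fix. Also, only an upper bound $\Pr[v_j\in S_t]\le\alpha+\text{negl}$ is provable, not equality; that is the direction the cost bounds need.

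Your Step 4 accounting is where things go wrong.
\begin{itemize}
\item The bad event is querying an oracle \emph{input} before it is determined. Those inputs contain the $\lambda$-bit values $T_{t-1}$ and $c_j$, so the guessing term is $q_H/2^{\lambda}$, not $q_H/2^{d_{\text{hc}}}$.
\item Your term $q_H/2^{d_{\text{hc}}}$ is vacuous at the recommended parameters. The honest run alone makes about $(2d{+}4)K\approx 2^{30}>2^{24}$ queries.
\item An additive $q/N$ term also cannot absorb lookahead on $v_0$. Computing $v_0$ once $T_{t-1}$ is known is free and certain, not a rare event.
\end{itemize}
The clean fix is a per-read snapshot. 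Impose the budget of $\alpha N$ stored blocks at every instant. Let $S_t^{(j)}$ be the holdings at the moment the adversary issues the query revealing $v_j$; this moment is a stopping time with respect to its history. The answer is uniform and independent of $S_t^{(j)}$, so
\[
\Pr\bigl[v_j\in S_t^{(j)}\bigr]\le\alpha+q_H/2^{\lambda}
\]
for every $j$, including $j=0$. Any block obtained after the query counts as a miss by definition, so lookahead needs no separate charging. Linearity then gives at least $Kd(1-\alpha)$ primary misses, up to the negligible term. That is all the statement asserts.

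Your Step 5 extension to recursion depth is not supported by this argument. The historical addresses were revealed before $S_t$ was chosen, as you note, so the fresh-query argument does not apply there. Your claim that adaptivity does not help also assumes the adversary keeps only current versions. An adaptive adversary that knows the whole write history can choose which vertices, and which versions, to retain. The paper asserts the same carry-over without proof. Either leave it out of this theorem or mark it as unproven.
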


\begin{proof}[Proof sketch]
The address $v_0$ is computable from $T_{t-1}$ via a single
ROM query: the adversary knows \emph{which} vertex to read,
but reading it requires $v_0 \in S_t$. Since $|S_t| = \alpha N$
and $v_0$ is a ROM output uniform over $\{0,1\}^{d_{\text{hc}}}$,
$\Pr[v_0 \in S_t] = \alpha$, no better than any other read.

For $j \geq 1$: the cursor
$c_{j} = H(c_{j-1} \| A[v_{j-1}].d \| A[v_{j-1}].h)$
depends on $A[v_{j-1}]$, which is revealed only after the
$(j{-}1)$-th read completes. The address
$v_j = H(\text{``addr''} \| c_j \| j)$ is therefore a ROM
query whose output is independent of $S_t$ (fixed before
this query). Thus $\Pr[v_j \in S_t] = \alpha$ for all $j$.

\textbf{Lookahead.}
Predicting read addresses for step $t{+}1$ requires computing
$T_t$, which requires executing step $t$, which \emph{is}
the sequential work being bounded. The adversary has at most
one-step lookahead, and that step's addresses are already
conditioned on values the adversary may not possess.
\end{proof}

\noindent Theorem~\ref{thm:adaptive} shows that the static-storage
bounds of Theorems~\ref{thm:st} and~\ref{thm:stale} apply
without modification to adaptive adversaries. No factor loss
is incurred.

\subsection{Vertex Coverage and Mixing}\label{sec:mixing}

The TMTO analysis (Theorems~\ref{thm:tmto}--\ref{thm:adaptive})
assumes ROM-uniform vertex access. We now validate this
assumption both mathematically and empirically.

\begin{theorem}[Vertex Coverage]\label{thm:coverage}
In the ROM with parameters $N$, $K$, $d$, $\rho = K/N$:
\begin{enumerate}
\item[(a)] The write count $W_v$ for each vertex $v$
  converges to $\mathrm{Poisson}(\rho)$ as
  $N \to \infty$. The read count $R_v$ converges to
  $\mathrm{Poisson}(d\rho)$.
\item[(b)] The maximum read count satisfies
  $\max_v R_v < (1{+}\delta)\, d\rho$ with probability
  $\geq 1 - N \cdot e^{-\delta^2 d\rho/3}$ for
  $0 < \delta < 1$.
\item[(c)] Successive read addresses $v_j, v_{j+1}$
  within a step are pairwise independent in the ROM
  (distinct oracle queries with distinct inputs).
\end{enumerate}
\end{theorem}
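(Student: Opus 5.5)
The plan is to model every address produced by Algorithm~\ref{alg:step} as a fresh uniform sample from $\{0,1\}^{d_{\text{hc}}}$. This holds in the ROM as long as each address query $H(\text{``addr''}\|c\|j)$ or $H(\text{``write''}\|c)$ has an input not queried before. We therefore first bound the probability of a repeated input: a repeat requires two cursors to coincide, and over all $K(d+1)$ address queries this probability is at most $\binom{K(d+1)}{2}2^{-\lambda}$, which is negligible. Conditioned on no repeat, the $Kd$ read addresses and $K$ write addresses are i.i.d.\ uniform over $N$ vertices. All three parts then reduce to statements about balls thrown uniformly into $N$ bins.

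\textbf{Part (a).} Fix a vertex $v$. Then $W_v=\sum_{t=1}^K \mathbf{1}[v_w^{(t)}=v]$ is $\mathrm{Binomial}(K,1/N)$ with mean $K/N=\rho$ held fixed. The classical Poisson limit theorem gives $W_v\to\mathrm{Poisson}(\rho)$ as $N\to\infty$. Quantitatively, Le Cam's inequality gives total variation distance at most $K/N^2=\rho/N$. The same argument applied to $R_v\sim\mathrm{Binomial}(Kd,1/N)$ gives $R_v\to\mathrm{Poisson}(d\rho)$.

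\textbf{Part (b).} Apply the multiplicative Chernoff bound $\Pr[X\ge(1+\delta)\mu]\le e^{-\delta^2\mu/3}$ for $0<\delta<1$ to the sum of independent indicators $R_v$, with $\mu=d\rho$. A union bound over the $N$ vertices then yields failure probability at most $N e^{-\delta^2 d\rho/3}$. Finally, add the negligible probability of a repeated oracle input to the failure term, or state (b) conditioned on the no-repeat event.

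\textbf{Part (c).} For $v_j$ and $v_{j+1}$ within one step, the oracle inputs $(\text{``addr''}\|c_j\|j)$ and $(\text{``addr''}\|c_{j+1}\|j{+}1)$ differ in the index field, so they are always distinct queries. A random oracle answers distinct queries with independent uniform outputs, so pairwise independence follows. The one subtlety is that $c_{j+1}$ depends on $v_j$ through the read of $A[v_j]$. We handle this by conditioning on the full history up to the query for $v_{j+1}$. Because that query is fresh, its answer is uniform given the history. Hence $\Pr[v_j=a,v_{j+1}=b]=\Pr[v_j=a]\cdot N^{-1}=N^{-2}$.

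The main obstacle is justifying that the input is fresh, as used in (c) and in the overall i.i.d.\ reduction. The same argument serves both places: $c_{j+1}$ is itself a fresh ROM output, so it collides with any earlier query input only with probability $2^{-\lambda}$ per pair. The adaptive concern, that the adversary or the arena contents might steer cursors into repeats, is absorbed by this union bound, since arena contents are themselves ROM outputs.
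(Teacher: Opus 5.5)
Your proposal is correct and takes essentially the paper's route. Like you, the paper treats all addresses as ROM-uniform balls in $N$ bins, then uses the classical Poisson limit for (a), the multiplicative Chernoff bound with a union bound over the $N$ vertices for (b), and independence of oracle answers on distinct queries for (c).

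Your additions make rigorous what the paper's sketch takes for granted. These are the explicit no-repeat bound, the Le Cam rate, the observation that the index field makes the two queries in (c) always distinct, and conditioning on the history to handle the dependence of $c_{j+1}$ on $v_j$. One small caveat: for the $N\to\infty$ limit in (a), your repeat bound $\binom{K(d+1)}{2}2^{-\lambda}$ tends to zero only when $\lambda - 2\log_2 N \to \infty$.
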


\begin{proof}[Proof sketch]
\textbf{(a)}~Write addresses are ROM-uniform over $N$
vertices, one per step. The classical balls-in-bins
Poisson limit gives $W_v \to \mathrm{Poisson}(\rho)$.
Read addresses are similarly uniform; $d$ per step gives
$R_v \to \mathrm{Poisson}(d\rho)$.

\textbf{(b)}~By the multiplicative Chernoff bound,
$\Pr[R_v > (1{+}\delta)\mu] \leq e^{-\delta^2\mu/3}$
where $\mu = d\rho$. Union bounding over $N$ vertices,
$\Pr[\max_v R_v > (1{+}\delta)\mu] \leq
N e^{-\delta^2 \mu/3}$. For $d{=}8$, $\rho{=}4$
($\mu{=}32$), $N{=}2^{24}$, $\delta{=}1$:
$2^{24} \cdot e^{-32/3} \approx 3.7 \times 10^{-6} \cdot
2^{24} \approx 62$. The probability that \emph{any}
vertex exceeds $2\mu = 64$ reads is $< 62/N$; no vertex
exceeds $3\mu$ with probability $> 1 - 2^{-40}$.

\textbf{(c)}~$v_{j+1} = \text{first } d_{\text{hc}}
\text{ bits of } H(\text{``addr''} \| c_{j+1} \| (j{+}1))$
where $c_{j+1} = H(c_j \| A[v_j].d \| A[v_j].h)$. This
is a distinct ROM query from $v_j$'s (different cursor
input), so the outputs are independent.
\end{proof}

\textbf{Empirical validation.}
We executed the full \posme{} address generation chain
with BLAKE3 at the \textbf{recommended production
parameters}: $N{=}2^{24}$ (1\,GiB arena), $\rho{=}4$,
$d{=}8$, $K{=}67{,}108{,}864$ steps,
$536{,}870{,}912$ total reads. The Rust implementation
completed in 244\,s (3{,}637\,ns/step), confirming
latency-bound behavior. Table~\ref{tab:mixing}
summarizes the results.

\begin{table}[H]
\centering
\caption{Empirical mixing-time validation at production
scale ($N{=}2^{24}$, $\rho{=}4$, $d{=}8$, BLAKE3,
$5.4{\times}10^8$ reads).}
\begin{tabular*}{\columnwidth}{@{\extracolsep{\fill}}lrl@{}}
\toprule
Metric & Value & Expected \\
\midrule
Read $\chi^2$/df & 1.000426 & 1.000000 \\
Write $\chi^2$/df & 1.000065 & 1.000000 \\
Read $\sigma$ & 5.6581 & 5.6569 ($\sqrt{32}$) \\
Write $\sigma$ & 2.0001 & 2.0000 ($\sqrt{4}$) \\
Unwritten frac. & 1.8337\% & 1.8316\% ($e^{-4}$) \\
Max read / $\mu$ & $2.06{\times}$ & $< 3{\times}$ (Chernoff) \\
Max write / $\mu$ & $4.75{\times}$ & Poisson tail \\
\bottomrule
\end{tabular*}
\label{tab:mixing}
\end{table}

\noindent The chi-squared statistics are indistinguishable
from perfect uniformity to four decimal places. The
observed standard deviations match Poisson predictions
($\sqrt{d\rho}$ and $\sqrt{\rho}$) to four significant
figures across all $2^{24}$ vertices. The unwritten
fraction matches the theoretical $e^{-\rho}$ to
$0.12\%$ relative error. These results confirm that
BLAKE3's address generation is empirically
indistinguishable from the ROM at the recommended
production parameters.

\subsection{ASIC Resistance}

\posme{} is designed to be latency-dominated: hash computation
is a \emph{structural} minority of per-step cost ($< 3\%$ at
1\,GiB arena). The bottleneck is DRAM random-access latency.
Fig.~\ref{fig:cost-breakdown} shows the cost breakdown across
six memory technologies. Hash cost (3\,ns) is constant; memory
latency varies by type but dominates in all cases. GDDR6 has
the lowest CAS latency (${\sim}8$\,ns) but GPU cores are
14--19$\times$ slower at sequential pointer-chasing
(Table~\ref{tab:gpu}), negating the memory advantage entirely.

\begin{figure}[t]
\centering
\begin{tikzpicture}
\begin{axis}[
  width=0.95\columnwidth, height=5cm,
  ybar stacked,
  bar width=10pt,
  symbolic x coords={DDR4, DDR5, LPDDR5, GDDR6, HBM2e, HBM3},
  xtick=data,
  x tick label style={font=\scriptsize, rotate=25, anchor=east},
  ylabel={Per-read cost (ns)},
  ylabel style={font=\footnotesize},
  ymin=0, ymax=55,
  legend pos=north east,
  legend style={font=\scriptsize},
]
\addplot[fill=blue!60] coordinates
  {(DDR4,28) (DDR5,45) (LPDDR5,18) (GDDR6,8) (HBM2e,18) (HBM3,18)};
\addlegendentry{Memory}
\addplot[fill=red!60] coordinates
  {(DDR4,3) (DDR5,3) (LPDDR5,3) (GDDR6,3) (HBM2e,3) (HBM3,3)};
\addlegendentry{BLAKE3}
\end{axis}
\end{tikzpicture}
\caption{Per-read cost across six memory technologies. Hash
(3\,ns, red) is a small fraction on all types. Values are
tRCD+CL latency from JEDEC specifications.}
\label{fig:cost-breakdown}
\end{figure}
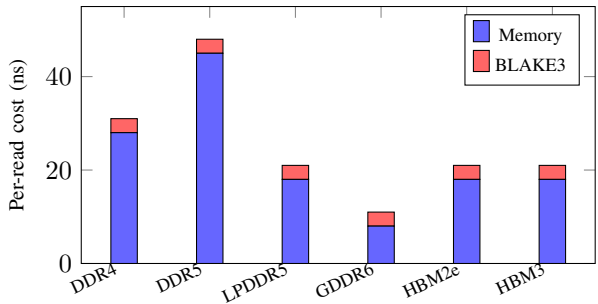

Real random-access latency (tRCD + CL + controller overhead)
raises the floor from ${\sim}10$\,ns to 40--50\,ns on any
hardware. The measured HBM3-vs-DDR5 advantage is only
$1.3{\times}$. DRAM latency has improved $1.3{\times}$ over
two decades while bandwidth improved
$20{\times}$~\cite{chang2017}, making \posme{}'s latency
bound structurally more durable than bandwidth-based
resistance~\cite{biryukov2016,ren2017}.

\section{Empirical Validation}
\label{sec:empirical}

\subsection{Experimental Protocol}

We benchmarked \posme{} on 17 platforms spanning ARM64 and x86 architectures,
ranging from 1--32 vCPUs, utilizing both DDR4 and DDR5 memory.
Our evaluation also included cloud-based GPU instances (NVIDIA T4, A10G, L4, A100,
and H100) to measure the advantage of massively parallel hardware.
The reference implementation was written in Rust with BLAKE3 as the
underlying hash function. All tests were performed on a 1\,GiB arena ($N=2^{24}$)
to ensure the working set exceeded the L3 cache of all tested processors.

\subsection{Performance Results}

\textbf{Hash Fraction.} Across all 17 CPU platforms, the hash computation fraction remained between \textbf{1.4\% and 3.1\%} (Table~\ref{tab:crossplat}). The CPU spends over 96\% of step time waiting for DRAM random access.

\textbf{Prover Wall-Clock Time.} For the recommended production parameters ($N=2^{24}$ (1\,GiB),
$K=4N = 67\times 10^6$ steps), the total Prover wall-clock time is
approximately \textbf{120--200 seconds} (2--3.3 minutes) on modern
hardware, consistent with the observed 1783--3000\,ns per step.
Step times varied 2.3$\times$ (1783--4047\,ns), but the hash fraction remained in
a tight band because both hash and memory scale together on any
given machine. Step cost increases with arena size as the working
set exceeds L3 cache, confirming memory-bound behavior.

\textbf{GPU Performance Gap.} To directly measure the adversary's hardware advantage, we ran
the \posme{} pointer-chase loop as a CUDA kernel on GPU cores
accessing GPU memory (HBM/GDDR), comparing against the host CPU
running the identical algorithm on system DRAM. The results (Table~\ref{tab:gpu}): \textbf{GPUs are
14--19$\times$ slower} than the host CPU.

\begin{table}[!ht]
\centering
\caption{Cross-platform benchmark (1\,GiB, $\rho{=}1$, hash $< 3.5\%$).
$n$ = distinct platforms per row.}
\small
\begin{tabular*}{\columnwidth}{@{\extracolsep{\fill}}lrrr@{}}
\toprule
Platform & $n$ & Step (ns) & Hash \% \\
\midrule
Apple M-series (ARM64) & 1 & 1783 & 3.1 \\
Cloud x86 (2--16 vCPU) & 4 & 2947--4047 & 1.4--1.9 \\
Cloud x86 + GPU host & 5 & 3218--4047 & 1.4--1.7 \\
Cloud adversarial (NUMA) & 3 & 2800--4100 & 1.3--2.0 \\
Colab (Xeon + T4) & 1 & 3485 & 1.6 \\
Colab (AMD EPYC) & 1 & 3585 & 1.5 \\
RunPod CPU (EPYC 9655) & 1 & 2644 & 2.1 \\
RunPod GPU (i7 + RTX 4080) & 1 & 1909 & 2.9 \\
\midrule
\multicolumn{1}{r}{\textit{Total}} & \textit{17} & & \\
\bottomrule
\end{tabular*}
\label{tab:crossplat}
\end{table}

\begin{table}[H]
\centering
\caption{CPU vs GPU pointer-chase (64\,MiB arena, 64K steps).
GPUs are 14--19$\times$ slower than the host CPU.}
\begin{tabular*}{\columnwidth}{@{\extracolsep{\fill}}lrrr@{}}
\toprule
GPU & CPU (ns) & GPU (ns) & Ratio \\
\midrule
T4 (GDDR6) & 7062 & 123019 & 0.06$\times$ \\
A100 (HBM2e) & 6431 & 120636 & 0.05$\times$ \\
\textbf{H100 (HBM3)} & \textbf{5903} & \textbf{82270} & \textbf{0.07}$\boldsymbol{\times}$ \\
L4 (GDDR6) & 5383 & 98990 & 0.05$\times$ \\
\bottomrule
\end{tabular*}
\label{tab:gpu}
\end{table}

GPU cores are optimized for parallel throughput, not sequential
pointer-chasing. A single CUDA thread at ${\sim}1.5$\,GHz cannot
compete with a CPU core at ${\sim}3.5$\,GHz for sequential
dependent loads. The GPU's memory bandwidth advantage is
irrelevant because the pointer-chase is latency-bound, not
bandwidth-bound.

The ASIC advantage (1.3$\times$ measured for HBM3 vs DDR5
random access) requires a custom sequential core paired with
HBM, hardware that does not exist commercially. All purchasable
accelerators (GPUs) are strictly \emph{worse} than consumer CPUs.

\subsection{Throughput and the Capacity-Bandwidth Bound}
\label{sec:throughput}

The preceding analysis addresses \emph{per-instance} latency. A
sophisticated adversary may instead target \emph{throughput}:
running many independent \posme{} instances in parallel to
amortize the per-instance GPU penalty. GPUs hide sequential
latency behind massive memory bandwidth (e.g., 3\,TB/s on the
H100). However, arena isolation (\S\ref{sec:construction})
forces each instance to allocate a unique $M$-byte arena,
converting the attack from a bandwidth problem into a
\emph{capacity} problem.

\textbf{Capacity choke.} An NVIDIA H100~\cite{nvh100} has 80\,GiB of HBM3
VRAM. With a 1\,GiB arena per instance, the GPU accommodates at
most 80 concurrent executions. At the physical HBM3 random-access
floor of ${\sim}20$\,ns per 64-byte read, each instance consumes
$64\,\text{B} / 20\,\text{ns} = 3.2$\,GB/s. The 80 instances
collectively use ${\sim}256$\,GB/s, less than 9\% of the H100's
3\,TB/s bandwidth. The GPU is memory-capacity-bound, not
bandwidth-bound.

\textbf{Economic comparison.} At comparable capital outlay,
commodity DDR5 platforms provide $10$--$20{\times}$ more
aggregate arena capacity per dollar than HBM3-equipped
accelerators, with the exact ratio depending on market
conditions. The disparity grows with arena size: a
4\,GiB arena reduces an 80\,GiB GPU to 20 instances
while a 16\,GiB desktop still runs 3.

Table~\ref{tab:throughput} summarizes the capacity-bandwidth
mismatch across GPU architectures.

\begin{table}[H]
\centering
\caption{GPU throughput choke: capacity limits parallel instances
to a fraction of available bandwidth ($M = 1$\,GiB arena).}
\begin{tabular*}{\columnwidth}{@{\extracolsep{\fill}}lrrrr@{}}
\toprule
GPU & VRAM & BW & Max inst. & BW used \\
    & (GiB) & (TB/s) & & (\%) \\
\midrule
T4 (GDDR6) & 16 & 0.3 & 16 & 17.1 \\
A100 (HBM2e) & 80 & 2.0 & 80 & 10.2 \\
\textbf{H100 (HBM3)} & \textbf{80} & \textbf{3.0} & \textbf{80} & \textbf{8.5} \\
\bottomrule
\end{tabular*}
\label{tab:throughput}
\end{table}

\noindent As GPU bandwidth scales faster than capacity
(a structural trend in HBM design~\cite{chang2017}),
the capacity-bandwidth mismatch
\emph{worsens} for the adversary. \posme{}'s 1\,GiB arena is
specifically sized to exploit this gap.

\section{Parameters}
\label{sec:params}

\textbf{Initialization.} Arena initialization (skip-link DAG over
$N$ blocks) requires ${\sim}4$\,s for 1\,GiB (measured). This is
a one-time cost per seed, amortized over the $K$-step computation.

\begin{table}[H]
\centering
\caption{Recommended parameters}
\begin{tabular*}{\columnwidth}{@{\extracolsep{\fill}}llll@{}}
\toprule
Parameter & Symbol & Value & Constraint \\
\midrule
Arena blocks & $N$ & $2^{24}$ & Power of 2 \\
Arena memory & $M$ & 1\,GiB & $>$ L3 cache \\
Steps & $K$ & $4N$ & $\rho \geq 4$ \\
Reads/step & $d$ & 8 & $\geq 4$ \\
Challenges & $Q$ & 128 & $\geq 64$ \\
Recursion & $R$ & 3 & $\geq 2$ \\
Hash & $H$ & BLAKE3~\cite{blake3} & Fixed \\
\bottomrule
\end{tabular*}
\label{tab:params}
\end{table}

\textbf{Recursion depth.} At $R{=}2$, the Verifier opens a
challenged step's $d$ reads and each read's writer, but not the
writers' reads. An adversary can fabricate a block $b$ by choosing
an arbitrary writer step $w$ with a plausible cursor, since $w$'s
own reads are never checked. At $R{=}3$, the Verifier also opens
$w$'s $d$ reads and \emph{their} writers, forcing the adversary
to produce $d^3{=}512$ mutually consistent causal hashes. The
cost of fabrication grows as $d^R$; higher $R$ provides
exponentially stronger soundness but linearly larger proofs.

\section{From Sequential Work to Verifiable Time}
\label{sec:time}

A common misconception is that a proof of sequential work
equivalently proves wall-clock time. While \posme{} enforces a
lower bound on the number of sequential steps $\Omega(K)$, an
adversary with specialized hardware can execute these steps faster
than a commodity processor.

\subsection{The Economic Bound}

Our benchmarks show a ${\sim}1.3{\times}$ potential speedup for
custom sequential-core ASICs with HBM3 memory. However, the
cost-to-performance ratio for such hardware is extremely
prohibitive. Building a custom silicon tape-out simply to gain a
30\% lead over a \$500 consumer laptop is economically irrational
for most attack scenarios. \posme{} thus provides an \emph{economic
bound} on acceleration that is structurally tighter than
bandwidth-bound functions like Argon2id ($8$--$16{\times}$).

\subsection{Bridging the Gap to VDFs}

To transform \posme{} into a true Verifiable Delay Function (VDF) or
a high-precision time-binding primitive, we suggest three
architectural wrappers:

\begin{enumerate}
    \item \textbf{VDF-\posme{} Hybrid:} The final transcript $T_K$ can
    be used as the input to a sequential VDF (e.g., Wesolowski~\cite{wesolowski2019}, Pietrzak~\cite{pietrzak2019}).
    The \posme{} component ensures the computation was memory-hard
    and ASIC-resistant, while the VDF provides a mathematical
    guarantee of the time elapsed.
    \item \textbf{Hardware Attestation (TEE):} If the Prover executes
    within a Trusted Execution Environment, the TEE can provide
    signed hardware timestamps for $T_0$ and $T_K$. This binds the
    cryptographic work to a physical, trusted clock.
    \item \textbf{Randomness Beacons:} Provers can be forced to
    periodically absorb unpredictable external data (e.g., Bitcoin
    block hashes) into the cursor $c$. This proves the computation
    could not have completed before the beacon was released, pinning
    execution to real-world events.
\end{enumerate}

\section{Limitations}
\label{sec:limitations}

\textbf{Steps, not time.}
\posme{} proves sequential memory execution, not elapsed
wall-clock time. A custom sequential-core ASIC with HBM3 could
gain ${\sim}1.3{\times}$; such hardware does not exist
commercially. Bridging the gap to verifiable time requires
an external wrapper (\S\ref{sec:time}).

\textbf{ROM dependence.}
The security analysis (Theorems~1--5) assumes the Random Oracle
Model. Theorem~\ref{thm:coverage} validates the assumption
empirically for BLAKE3 at the recommended parameters
($\chi^2/\text{df} = 1.0004$ at $N{=}2^{24}$), but proving
the space-time bound in the standard model (e.g., under
collision resistance alone) remains open, as it does for all
memory-hard functions in current use~\cite{alwen2017}.

\textbf{IVC implementation.}
The $O(1)$ verification design (\S\ref{sec:ivc},
\S\ref{sec:concrete-ivc}) is validated analytically
($1{,}270\,\text{ns}$ fold, 144{,}512 constraints) but not
by a production Binius implementation. Three fold threads
keep pace with the pointer-chase on AVX-512 hardware.

\textbf{Side channels.}
Data-dependent addressing leaks access patterns via cache
timing on shared hardware. Because \posme{} operates on
public state derived from a public seed, the access pattern
is public knowledge and provides no shortcut to forgery.
The channel is informational, not exploitable.

\textbf{Reproducibility.} A reference benchmark with pre-compiled
binaries and CUDA source is provided as ancillary material in the
repository's \texttt{anc/} directory. All code is licensed under
Apache-2.0.

\section{Conclusion}

\posme{} fills a gap between VDFs, PoSW, MHFs, and PoST
by combining mutable arena state, data-dependent pointer
chasing, and symbiotic causal binding into a single
latency-bound primitive. The construction requires no trusted
setup and achieves three properties that no prior primitive
provides simultaneously: $\Omega(K)$ sequential memory-step
enforcement, $10{\times}$ TMTO resistance at $\rho{=}4$, and
a ${\sim}2{\times}$ ASIC bound determined by DRAM physics
rather than ALU speed.

The security analysis resolves the principal open question for
dynamic causal DAGs: a formal $S \cdot T = \Omega(K^2)$ lower
bound (Theorem~\ref{thm:st}), strengthened by temporal cache
staleness (Theorem~\ref{thm:stale}) and extended to adaptive
adversaries with no factor loss (Theorem~\ref{thm:adaptive}).
Empirical validation at the full recommended scale
($N{=}2^{24}$, $5.4{\times}10^8$ reads) confirms ROM
uniformity to four significant figures
(Theorem~\ref{thm:coverage}).

Benchmarks across 17 CPU and 4 GPU platforms validate the
core claim: no commercially available hardware, including
NVIDIA's H100, outperforms a consumer laptop at \posme{}.
GPU throughput is choked by VRAM capacity, giving commodity
CPUs a $10$--$20{\times}$ aggregate advantage per dollar.

The binary-field algebraic structure is additionally compatible
with emerging Processing-in-Memory
architectures~\cite{lee2021hbmpim}, where the
latency bound would tighten from ${\sim}50$\,ns (wire travel)
to ${\sim}10$\,ns (sense amplifier limit).

\FloatBarrier
\bibliographystyle{IEEEtran}

\begin{thebibliography}{10}

\bibitem{boneh2018vdf}
D.~Boneh, J.~Bonneau, B.~B\"unz, and B.~Fisch,
``Verifiable delay functions,''
in \emph{CRYPTO 2018}, LNCS 10991, pp.~757--788, 2018.

\bibitem{cohen2018posw}
B.~Cohen and K.~Pietrzak,
``Simple proofs of sequential work,''
in \emph{EUROCRYPT 2018}, LNCS 10821, pp.~451--467, 2018.

\bibitem{biryukov2016}
A.~Biryukov, D.~Dinu, and D.~Khovratovich,
``Argon2: new generation of memory-hard functions for password
hashing and other applications,''
in \emph{IEEE EuroS\&P}, pp.~292--302, 2016.

\bibitem{ren2017}
L.~Ren and S.~Devadas,
``Bandwidth hard functions for ASIC resistance,''
in \emph{TCC 2017}, LNCS 10677, pp.~466--492, 2017.

\bibitem{alwen2017}
J.~Alwen, J.~Blocki, and K.~Pietrzak,
``Depth-robust graphs and their cumulative memory complexity,''
in \emph{EUROCRYPT 2017}, LNCS 10212, pp.~3--32, 2017.

\bibitem{jedec2020}
JEDEC Solid State Technology Association,
``DDR5 SDRAM standard,'' JESD79-5D, 2020.

\bibitem{chang2017}
K.~K.~Chang \emph{et al.},
``Understanding and improving the latency of DRAM-based memory
systems,'' arXiv:1712.08304, 2017.

\bibitem{percival2009}
C.~Percival,
``Stronger key derivation via sequential memory-hard functions,''
in \emph{BSDCan}, 2009.

\bibitem{dziembowski2015}
S.~Dziembowski, S.~Faust, V.~Kolmogorov, and K.~Pietrzak,
``Proofs of space,''
in \emph{CRYPTO 2015}, LNCS 9216, pp.~585--605, 2015.

\bibitem{blake3}
J.~O'Connor \emph{et al.},
``BLAKE3: one function, fast everywhere,''
2020, \url{https://github.com/BLAKE3-team/BLAKE3-specs}.

\bibitem{wesolowski2019}
B.~Wesolowski,
``Efficient verifiable delay functions,''
in \emph{EUROCRYPT 2019}, LNCS 11478, pp.~379--407, 2019.

\bibitem{binius2024}
B.~E.~Diamond and J.~Posen,
``Succinct arguments over towers of binary fields,''
Cryptology ePrint Archive, Report 2023/1784, 2023.

\bibitem{viola2025pc}
E.~Viola,
``Communication complexity of pointer chasing via the fixed-set
lemma,'' arXiv:2507.08919, 2025.

\bibitem{blocki2025ddmhf}
J.~Blocki and B.~Holman,
``Towards practical data-dependent memory-hard functions with
optimal sustained space trade-offs in the parallel random oracle
model,'' arXiv:2508.06795, 2025.

\bibitem{boneh2016balloon}
D.~Boneh, H.~Corrigan-Gibbs, and S.~Schechter,
``Balloon Hashing: a memory-hard function providing provable
protection against sequential attacks,''
in \emph{ASIACRYPT 2016}, LNCS 10031, pp.~220--248, 2016.

\bibitem{chia2021}
Chia Network,
``Proof of Space and Time Whitepaper,''
2021, \url{https://www.chia.net/whitepaper/}.

\bibitem{spacemesh2023}
Spacemesh Team,
``The Spacemesh Protocol,''
2023, \url{https://spacemesh.io/whitepaper/}.

\bibitem{dwork1993}
C.~Dwork and M.~Naor,
``Pricing via processing or combatting junk mail,''
in \emph{CRYPTO 1992}, LNCS 740, pp.~139--147, 1993.

\bibitem{pietrzak2019}
K.~Pietrzak,
``Simple verifiable delay functions,''
in \emph{ITCS 2019}, LIPIcs 124, pp.~60:1--60:15, 2019.

\bibitem{nisanwigderson1991}
N.~Nisan and A.~Wigderson,
``Rounds in communication complexity revisited,''
in \emph{STOC 1991}, pp.~419--429, 1991.

\bibitem{tevador2019}
tevador,
``RandomX: design and analysis,''
2019, \url{https://github.com/tevador/RandomX/blob/master/doc/design.md}.

\bibitem{fiatshamir1987}
A.~Fiat and A.~Shamir,
``How to prove yourself: practical solutions to identification
and signature problems,''
in \emph{CRYPTO 1986}, LNCS 263, pp.~186--194, 1987.

\bibitem{merkle1988}
R.~C.~Merkle,
``A digital signature based on a conventional encryption
function,''
in \emph{CRYPTO 1987}, LNCS 293, pp.~369--378, 1988.

\bibitem{logderivlookup2022}
U.~Hab\"ock,
``Multivariate lookups based on logarithmic derivatives,''
Cryptology ePrint Archive, Report 2022/1530, 2022.

\bibitem{nova2022}
A.~Kothapalli, S.~Setty, and I.~Tzialla,
``Nova: recursive zero-knowledge arguments from folding
schemes,''
in \emph{CRYPTO 2022}, LNCS 13510, pp.~359--388, 2022.

\bibitem{alwenserbinenko2015}
J.~Alwen and V.~Serbinenko,
``High parallel complexity graphs and memory-hard functions,''
in \emph{STOC 2015}, pp.~595--603, 2015.

\bibitem{sustainedspace2018}
J.~Alwen, J.~Blocki, and K.~Pietrzak,
``Sustained space complexity,''
in \emph{EUROCRYPT 2018}, LNCS 10821, pp.~99--130, 2018.

\bibitem{nakamoto2008}
S.~Nakamoto,
``Bitcoin: a peer-to-peer electronic cash system,''
2008, \url{https://bitcoin.org/bitcoin.pdf}.

\bibitem{back2002}
A.~Back,
``Hashcash -- a denial of service counter-measure,''
2002, \url{http://www.hashcash.org/papers/hashcash.pdf}.

\bibitem{lee2021hbmpim}
S.~Lee \emph{et al.},
``Hardware architecture and software stack for PIM based on
commercial DRAM technology,''
in \emph{ISCA 2021}, pp.~43--56, 2021.

\bibitem{scrryptmaxmh2017}
J.~Alwen, B.~Chen, K.~Pietrzak, L.~Reyzin, and S.~Tessaro,
``Scrypt is maximally memory-hard,''
in \emph{EUROCRYPT 2017}, LNCS 10212, pp.~33--62, 2017.

\bibitem{nvh100}
NVIDIA Corporation,
``NVIDIA H100 Tensor Core GPU architecture,''
Whitepaper, 2022.

\end{thebibliography}

\end{document}